
\documentclass[11pt]{article}
\usepackage[letterpaper,margin=1in]{geometry}
\usepackage{amsmath, amssymb, amsthm, amsfonts}
\usepackage{cite}

  



\usepackage{bbm}

\usepackage{ifthen}
\usepackage{tikz}
\usetikzlibrary{positioning,decorations.pathreplacing}

\usepackage{appendix}
\usepackage{graphicx}
\usepackage{color}
\usepackage{algorithm}
\usepackage[noend]{algpseudocode}
\usepackage{epstopdf}
\usepackage{wrapfig}
\usepackage{thm-restate}

\usepackage{wasysym}

\newcommand{\polylog}{\mathrm{polylog}}

\usepackage{framed}
\usepackage[framemethod=tikz]{mdframed}
\usepackage[bottom]{footmisc}
\usepackage{enumitem}
\setitemize{noitemsep,topsep=3pt,parsep=3pt,partopsep=3pt}
\usepackage[font=small]{caption}
\usepackage{xspace}

\newtheorem{theorem}{Theorem}[section]
\newtheorem{lemma}[theorem]{Lemma}
\newtheorem{meta-theorem}[theorem]{Meta-Theorem}

\newtheorem{corollary}[theorem]{Corollary}

\newtheorem{observation}[theorem]{Observation}
\newtheorem{definition}[theorem]{Definition}

\definecolor{darkgreen}{rgb}{0,0.5,0}
\usepackage{hyperref}
\hypersetup{
    unicode=false,          
    colorlinks=true,        
    linkcolor=red,          
    citecolor=darkgreen,        
    filecolor=magenta,      
    urlcolor=cyan           
}
\usepackage[capitalize, nameinlink]{cleveref}
\Crefname{lemma}{Lemma}{Lemmas}
\Crefname{claim}{Claim}{Claims}
\Crefname{remark}{Remark}{Remarks}
\Crefname{observation}{Observation}{Observations}
\algnewcommand\algorithmicswitch{\textbf{switch}}
\algnewcommand\algorithmiccase{\textbf{case}}

\algdef{SE}[SWITCH]{Switch}{EndSwitch}[1]{\algorithmicswitch\ #1\ \algorithmicdo}{\algorithmicend\ \algorithmicswitch}%
\algdef{SE}[CASE]{Case}{EndCase}[1]{\algorithmiccase\ #1}{\algorithmicend\ \algorithmiccase}%
\algtext*{EndSwitch}%
\algtext*{EndCase}%

\newcommand{\eps}{\varepsilon}

\newcommand{\energy}{$\mathsf{ENERGY}$\xspace}
\newcommand{\congest}{$\mathsf{CONGEST}$\xspace}
\newcommand{\local}{$\mathsf{LOCAL}$\xspace}

\newcommand{\poly}{\operatorname{\text{{\rm poly}}}}

\let\oldtextbf=\textbf
\renewcommand\textbf[1]{{\boldmath\oldtextbf{#1}}}



\newcommand{\FullOrShort}{short}

\ifthenelse{\equal{\FullOrShort}{full}}{
	
  \newcommand{\fullOnly}[1]{#1}
  \newcommand{\shortOnly}[1]{}
}{
    \newcommand{\shortOnly}[1]{#1}
    \newcommand{\fullOnly}[1]{}
  }


\begin{document}
\date{}
\title{A Near-Optimal Low-Energy Deterministic Distributed SSSP \\ with Ramifications on Congestion and APSP}

\author{
  Mohsen Ghaffari\\
  \small MIT \\
  \small ghaffari@mit.edu
  \and 
  Anton Trygub\\
  \small MIT \\
  \small trygub@mit.edu
 }

\maketitle
\begin{abstract} 
We present a low-energy deterministic distributed algorithm that computes exact Single-Source Shortest Paths (SSSP) in near-optimal time: it runs in $\tilde{O}(n)$ rounds and each node is awake during only $\poly(\log n)$ rounds. When a node is not awake, it performs no computations or communications and spends no energy.

The general approach we take along the way to this result can be viewed as a novel adaptation of Dijkstra's classic approach to SSSP, which makes it suitable for the distributed setting. Notice that Dijkstra's algorithm itself is not efficient in the distributed setting due to its need for repeatedly computing the minimum-distance unvisited node in the entire network. Our adapted approach has other implications, as we outline next.

As a step toward the above end-result, we obtain a simple deterministic algorithm for exact SSSP with near-optimal time and message complexities of $\tilde{O}(n)$ and $\tilde{O}(m)$, in which each edge communicates only $\poly(\log n)$ messages. Therefore, one can simultaneously run $n$ instances of it for $n$ sources, using a simple random delay scheduling. That computes All Pairs Shortest Paths (APSP) in the near-optimal time complexity of $\tilde{O}(n)$. This algorithm matches the complexity of the recent APSP algorithm of Bernstein and Nanongkai [STOC 2019] using a completely different method (and one that is more modular, in the sense that the SSSPs are solved independently). It also takes a step toward resolving the open problem on a deterministic $\tilde{O}(n)$-time APSP, as the only randomness used now is in the scheduling.
\end{abstract}






\section{Introduction and related work}
This paper is centered on distributed algorithms for the Single-Source Shortest Paths (SSSP) problem, i.e., computing the exact distances and shortest paths in a computer network from a designated source node. This is one of the basic and widely used problems in graph algorithms and especially computer networks. Our initial focus is on a relatively new facet: energy-efficient distributed algorithms. We believe this is practically relevant and theoretically interesting. But, as we outline soon, the approach we develop ends up having ramifications for more classic facets of the problem, namely the congestion of the SSSP algorithm and the extension to All-Pairs Shortest Paths (APSP).

Concretely, our primary objective is distributed SSSP algorithms with low energy. Informally, we want algorithms where each node is awake and active (computing or communicating) during only a ``small'' amount of time and therefore it consumes only a small amount of \textit{energy}. We present the actual definitions of the energy complexity and more discussions on it later. As is standard, the interpretation of small here is bounds that are at most $\poly(\log n)$, where $n$ denotes the number of nodes. As our main end result, we present a $\poly(\log n)$-energy deterministic distributed algorithm for exact SSSP with a near-optimal time complexity of $\tilde{O}(n)$. The energy bound means each node is awake for at most $\poly(\log n)$ time. This makes the algorithm scalable in terms of each node's energy consumption and thus suitable for energy-constrained networks, e.g., sensor networks. 
 Moreover, the energy bound directly implies a \textit{congestion} bound: since each node communicates in only $\poly(\log n)$ rounds, the algorithm sends at most $\poly(\log n)$ messages through each edge. This makes the algorithm suitable for networking settings that run many algorithms concurrently, as is usually the case in most computer networks. This is because low congestion algorithms allow the network to run many of them simultaneously, without significant slow down~\cite{ghaffari2015scheduling}. 
 
 This $\tilde{O}(n)$ time complexity is trivially near-optimal in worst-case graphs, with diameter $\tilde{\Theta}(n)$. It is also near-optimal in a stronger sense: an $\tilde{\Omega}(n)$ lower bound holds in graphs with diameter $D=\poly(\log n)$, once we insist on congestion $\poly(\log n)$. This is known and can be seen by simple adaptations of the well-known lower bound of Das Sarma et al.~\cite{DasSarma2011DistributedApproximation}. See also \cite[Section 10.6]{ghaffari2017thesis} for more on this, including pointing out the same (though for the lower bound applied to the MST problem), and for general discussions about low congestion algorithms and their benefits.

Our algorithm might be of interest not only for its end results but also more broadly for its novel approaches. The algorithm has two key ingredients: 
\begin{itemize}
\item[(I)] a new approach to the distributed computation of exact SSSP, which can be viewed as a distributed analog of Dijkstra's SSSP algorithm, and 
\item[(II)] an energy-efficient \textit{deterministic} distributed algorithm for unweighted SSSP (i.e., BFS).
\end{itemize}

Ingredient (I) has ramifications beyond energy considerations: 
It gives a $\poly(\log n)$-congestion $\tilde{O}(n)$ deterministic algorithm for SSSP. That directly yields an $\tilde{O}(n)$-time APSP algorithm via random scheduling~\cite{Leighton1994PacketSteps,ghaffari2015scheduling}, hence matching the near-optimal time-complexity that was achieved by Bernstein and Nanongkai~\cite{bernstein2019distributed}, after a line of work~\cite{elkin2017distributed,huang2017distributed}. Our approach is completely different and has some additional benefits. 

Ingredient (II) is based on a local awake/sleep coordination mechanism via deterministically constructed neighborhood covers, with a low-energy construction. It provides a deterministic counterpart to the randomized low-energy BFS algorithm of Dani and Hayes~\cite{dani2022wake}. We think that these deterministic BFS and neighborhood cover constructions might find applications in deterministic energy-efficient distributed algorithms for other graph problems. 

Next, we review the models and state our concrete contributions. To make the first ingredient more widely accessible, in \Cref{subsec:WithoutEnergy}, we initially ignore energy considerations and discuss the new distributed approach to SSSP and its ramifications. Then, in \Cref{subsec:WithEnergy}, we zoom in on energy complexity and discuss our energy-efficient and deterministic BFS and SSSP algorithms. 

\subsection{Without Energy Considerations} 
\label{subsec:WithoutEnergy}

We first recall the message-passing model of distributed computation. Then we outline our new approach to SSSP in the context of the classic approaches and the recent related work.

\paragraph{Synchronous Distributed Message-Passing Model.} The network is abstracted as an undirected weighted graph $G=(V, E)$, and we use the notations $n:=|V|$ and $m:=|E|$. It is usually assumed that the weight $w(e)$ of each edge $e\in E$ is in the range $[1, \poly(n)]$. Each node represents one computer/processor, equipped with a unique identifier, typically assumed to have $b=O(\log n)$ bits. In this synchronous model, computation and communications occur in lock-step rounds $1, 2, 3, \dots$. Per round, each node performs some computations on the data that it holds. Then, it can send one $B$-bit message to each of its neighbors. All the messages sent in a round arrive by the end of the round. Then, the algorithm proceeds to the next round. Typically, we assume $B=O(\log n)$. This model with this message size is sometimes called \congest~\cite{Peleg2000DistributedApproach}. The \textit{time complexity} is the number of rounds until all nodes compute their outputs. The \textit{message complexity} is the total number of messages sent during the algorithm throughout the network.


\paragraph{SSSP in the Distributed Setting:} 
To get an understanding of the problem, let us revisit some of the classic approaches and their shortcomings. For SSSP, there is a simple algorithm of Bellman and Ford~\cite{bellman1958routing, ford1956network} that runs in the optimal $O(n)$ time: Start with $\bar{d}(s, s)=0$ and $\bar{d}(s, u)=\infty, \forall u\neq s$. Then, per round, each node $u$ updates its distance estimate $\bar{d}(s, u)=\min_{v \in N(u)} \big(\bar{d}(s, v) + w(vu)\big)$. Here, $w(vu)$ denotes the weight of the edge connecting neighbor $v$ to node $u$. In essence,  $\bar{d}(s, u)$ is an estimate on the distance, and it is always maintained to satisfy $\bar{d}(s, u)\geq dist(s, u)$. Node $u$ updates its estimate $\bar{d}(s, u)$ by checking each neighbor $v$ and seeing if connecting through $v$, with its current distance estimate $\bar{d}(s, v)$, gives $u$ a shorter path from the source $s$. This operation is called \textit{relaxing edge $vu$}. A major drawback is that this algorithm relaxes each edge in each round, and thus has message complexity $\Theta(mn)$, and $\Omega(n)$ congestion--this becomes problematic when trying to run several SSSPs concurrently. See, e.g., \cite[Sections 2.1 \& 3.1]{nanongkai2014distributed}.

The above $\Theta(mn)$ number of relaxations is also too expensive for sequential SSSP computations. In the sequential setting, for undirected graphs, the classic algorithm of Dijkstra remedies this and achieves an $\tilde{O}(m)$ complexity. The key is to choose the relaxations carefully such that each edge is relaxed only once. Dijkstra's algorithm maintains a monotonically growing set $T$ of ``\textit{visited}" nodes, where each node $u\in T$ already knows its distance $\bar{d}(s, u)=dist(s, u)$. Each unvisited node $v\in V\setminus T$ remains with a distance estimate $\bar{d}(s, v)$ that is not necessarily finalized. Dijkstra then finds the unvisited node in $V\setminus T$ that has the smallest $\bar{d}(s, v)$, and adds $v$ to $T$, thus visiting it. It then relaxes all edges connecting $v$ to unvisited nodes. This way, each edge is relaxed only once. This approach is not suitable for distributed computing, because finding in each iteration the global minimizer of $\bar{d}(s, v)$ requires extra time and messages. A direct distributed implementation of Dijkstra would have time complexity $O(nD)$, where $D$ is the hop diameter and can be as large as $\Theta(n)$, and message complexity $O(n^2+m)$. These are far away from the desired bounds. 

Over the past decade, there has been much progress on distributed SSSP and its approximate variant~\cite{lenzen2013fast,nanongkai2014distributed, henzinger2016deterministic, elkin2017distributed, ghaffari2018improved,forster2018faster, chechik2020single,rozhovn2022undirected,rozhovn2023parallel}. The focus was on sublinear time complexities in graphs with sublinear hop diameter. In our case, we want congestion (and energy) bounds of $\poly(\log n)$, and thus we cannot hope to achieve a time complexity faster than $\tilde{\Omega}(n)$ even in low-diameter graphs (see \cite[Section 10.6]{ghaffari2017thesis}). Among these prior works, of particular relevance to our paper is an algorithm of Nanongkai~\cite{nanongkai2014distributed}, which computes a $(1+\eps)$ approximations of SSSP in the synchronous model of distributed computing, for any constant $\eps>0$, with $\tilde{O}(n)$ time complexity and $\tilde{O}(m)$ message complexity. In a sense, this algorithm is a simple and elegant \textit{rounding scheme} that reduces the $(1+\eps)$ approximations of SSSP to $O(\log n)$ instances of BFS (i.e., unweighted SSSP) in an undirected graph with ${O}(n)$ nodes. \footnote{We comment that several of the other works on exact algorithms use approximate versions as subroutines, via the scaling framework, see e.g., \cite{ghaffari2018improved,forster2018faster,chechik2020single,rozhovn2023parallel}. However, the transformations in this framework turn the graph into directed graphs (more accurately, graphs with asymmetric weights along the two directions of the edge), and this seems to cause a major obstacle in using any of those ideas with a low energy.}

\paragraph{Our result for low-congestion SSSP.} Our approach gives a deterministic distributed algorithm that computes SSSP with near-optimal time and message complexities: $\tilde{O}(n)$ time and $\tilde{O}(m)$ messages\footnote{This time complexity optimality is in the more classic worst-case graph sense considered in the distributed algorithms literature, where the diameter can be large~\cite{awerbuch1987optimal}. It would be interesting to improve the complexity in low-diameter graphs.} More crucially, the algorithm has only $\poly(\log n)$ congestion, that is, each edge sends at most $\poly(\log n)$ messages throughout the execution of the algorithm.

\paragraph{Implications for APSP} Considering the $\poly(\log n)$ congestion of our algorithm per edge, we can run $n$ copies of it for different sources concurrently in $\tilde{O}(n)$ time, using the simple random delays idea of scheduling~\cite{Leighton1994PacketSteps}, e.g., with the black-box theorems in~\cite{ghaffari2015scheduling}. Prior to our work, there was a progression of improvements on APSP, with complexities $\tilde{O}(n^{5/3})$ by Elkin~\cite{elkin2017distributed}, $\tilde{O}(n^{5/4})$ by Huang et al.\cite{huang2017distributed}, and then essentially resolved to $\tilde{O}(n)$ by Bernstein and Nanongkai~\cite{bernstein2019distributed}. Our algorithm achieves the same near-optimal time complexity as Bernstein and Nanongkai. But we think the new technique might be of interest. Our method is completely different and it solves APSP as $n$ independent SSSP computations. Furthermore, it can be seen as a step toward a deterministic APSP algorithm with time complexity $\tilde{O}(n)$, which remains open. The reason is as follows: the algorithm of Bernstein and Nanonkai uses randomness in three parts, (1) for sampling some center nodes along long paths, (2) for random filtering of broadcasts, ensuring that only small congestion passes through each edge, and (3) for random-delays scheduling. Our APSP algorithm uses randomness only for scheduling. Deterministic scheduling remains an important open problem in distributed graph algorithms. Any deterministic scheduling with bounds with $\poly(\log n)$ factor of the random delays method would make our APSP algorithm deterministic and $\tilde{O}(n)$ round complexity, thus resolving a classic open problem~\cite{bernstein2019distributed, agarwal2018deterministic,agarwal2020faster}. The best known deterministic distributed exact APSP is due to Agarwal and Ramachandran~\cite{agarwal2020faster} and runs in $\tilde{O}(n^{4/3})$ rounds, and that was an improvement on an earlier work of Agarwal, Ramachandran, King, and Pontecorvi~\cite{agarwal2018deterministic} which had an $\tilde{O}(n^{3/2})$ round complexity.

\paragraph{Our approach to distributed SSSP.}
As mentioned above, the efficiency of Dijkstra's (sequential) algorithm is rooted, in part, in that it relaxes each edge only a single time. However, for that, the algorithm needs to repeatedly identify the minimum distance unvisited node in the entire network. The latter makes the algorithm inefficient in the distributed setting. Our approach tries to achieve a similar effect of each edge getting relaxed only a small number of times---and concretely each edge communicating only $\poly(\log n)$ messages---without so much global coordination.

Our basic idea, being overly optimistic, is to set up a recursive algorithm: Let $W$ be the maximum edge weight and notice that $dist(s, v)\leq nW$ for all nodes $v\in V$. Let $d=nW$. Suppose that we had an exact \textit{cutter algorithm} that could distinguish the set $V_1$ of nodes $v$ such that $dist(s, v)\leq d/2$ from the rest of nodes $V_2=V\setminus V_1$. Then, we would first remove $V_2$ from the graph and solve SSSP recursively only among the vertices of $V_1$ with a maximum distance upper bound of $d/2$. Once that finishes, we would solve SSSP among only the vertices of $V_2$ (by removing $V_1$ nodes but making each $u\in V_2$ node connected to $v\in V_1$ simulate a source connected only to $u$ and at distance $dist(s, v)+w(uv)$ from $u$). There is some subtlety in this, in how we would inform $V_2$ nodes to start, but let us ignore that for now. If we manage to make this scheme work, each node would be active in only one side of the two-way split recursion. Thus, each edge would be involved in $O(\log n)$ cutter algorithms. Given a $\poly(\log n)$-congestion exact cutter, the overall algorithm would have a $\poly(\log n)$-congestion. But where do we get such an exact cutter? 

We use instead an approximate cutter. Nanongkai's SSSP approximation algorithm~\cite{nanongkai2014distributed} can give us a cutter with a small additive error. Concretely, it can identify a set $V_1$ of vertices with the following guarantees: (2) For all nodes $v$ such that $dist(s, v)\leq d$, we have $s\in V_1$, and (2) for all nodes $v\in V_1$, we have $dist(s, v)\leq d/2+\eps d/2$ for desirably small constant $\eps\in (0, 0.1)$. The algorithm has $\tilde{O}(n)$ rounds and $\poly(\log n)$ congestion. Indeed, it consists of running one BFS in a graph with suitably rounded weights. We plug in this approximate cutter instead of our ideal exact cutter. We comment that, perhaps surprisingly, any $\poly(\log n)$ factor approximation would also be equally useful here.

Then, we move into the first half of recursion only with nodes of $V_1$, i.e., in a graph where nodes of $V\setminus V_1$ are removed. The recursion guarantees is that, by the end, all nodes $v$ that have $dist(s, v)\leq d/2$ will learn their exact distance $dist(s, v)$. Then, for the second half of recursion, we effectively remove all these nodes and continue with only the nodes for which $dist(s, v)>d/2$. Notice that we lose the nice property that each node goes only in one of the two splits of the recursion. However, one can see that, in the recursion tree, each node is active in $O(\log n)$ subproblems. Furthermore, throughout, we are working with only undirected graphs and computing undirected BFSs; this property is crucial in allowing us to extend the result to have a low energy complexity.

One extra challenge, which should be at least mentioned is as follows (we admit that the solution is hard to summarize cleanly and probably more understandable in the actual algorithm): How do nodes of the second half know when to start their recursion? Their synchrony is crucial for BFS computations. Using a global broadcast to announce/coordinate that start would necessitate $\Theta(D)$ time, where $D$ denotes the network diameter. Considering that the recursion tree has at least $\Omega(n)$ subproblems, this would make the time complexity as slow as $\Omega(nD)$. We use instead a maximal forest of the set of nodes relevant to the subproblem at hand and coordinate through that. Because of this, different subproblems at the same part of the recursion might proceed at different speeds, and we need some care to show that this does not cause a problem.   

Finally, because of the recursion, we will end up having to solve the more general closest-source shortest path problem (CSSP), where for given set $S$ of sources, we want each node $v$ to know $dist(S, v)=min_{s\in S} dist(s, v)$. This requires coordination among the processes starting from different sources, and creates algorithmic complications. We do not discuss those issue here.

\subsection{With Energy Considerations}
\label{subsec:WithEnergy}
We next review the definition of energy complexity. We note that this measure was studied long ago primarily in the context of wireless networks---with an initial motivation coming from battery-powered devices, see e.g. \cite{jurdzinski2002efficient, jurdzinski2002energy}. However, over the past few years, there has been growing interest in understanding the energy complexity of various network computations in the basic message-passing model of distributed computing. See e.g., \cite{chang2017exponential, chang2018energy, chang2020energy, ChaterjeeGP20, DBLP:conf/wdag/BarenboimM21, DBLP:conf/podc/AugustineMP22, ghaffari2022average, dufoulon2023distributed, ghaffari2023distributed}. \footnote{To the best of our knowledge, our paper is the first to show that, distributed algorithms with low energy can be useful also for more classic problems in \congest/\local models (which do not directly focus on energy), because the low energy bound implies low congestion and allows us to run several algorithms essentially simultaneously.}

\paragraph{Model and Energy Complexity.} We continue with the standard synchronous message-passing model of distributed computing, but with the additional property that each node can choose to sleep for some rounds. In a round that a node is sleeping, it performs no computations and cannot send or receive any messages. In particular, any messages sent to it in this round are lost. This variant has sometimes been called the \textit{sleeping} model\cite{DBLP:conf/podc/AugustineMP22, dufoulon2023distributed}. The underlying modeling assumption is that, a node consumed negligible energy during a sleeping round. Hence, the number of rounds that the node is awake gives an (asymptotic) measure of the energy that it spends. The \textit{energy complexity} of an algorithm is the maximum energy spent by a node, i.e., the maximum over all nodes of the number of rounds in which this node is awake.

It is worth noting that energy complexity directly implies a bound on congestion: if each node is awake only $T$ rounds, each edge can have at most $T$ messages sent through it in each direction. As such, studying energy-efficient algorithms also leads to low-congestion algorithms. Generally, low-congestion algorithms allow more concurrent schedulings, and this can be a strong positive, either when for an algorithmic problem we need to run several instances of this algorithm simultaneously (e.g., in the APSP case discussed before), or when the computer network naturally runs several distributed algorithms/communication concurrently. Of course, a small congestion does not necessarily imply that a node is involved in a small amount of communication.

\paragraph{Computing SSSP/BFS with Low-Energy.} With the $\poly(\log n)$ energy complexity constraint, SSSP, and even BFS, become challenging. The difficulty, even in BFS, is that a node does not know its distance from the source. Thus, it does not know when it should be awake, listening to receive the message of the arriving BFS. Recall that when a node is sleeping, messages sent to it are lost. 

In a recent work, Chang, Dani, Hayes, and Pettie~\cite{chang2020energy} gave the first BFS algorithm with somewhat small energy and near-optimal time: their algorithm is randomized and computes BFS in ${D} \cdot 2^{O(\sqrt{\log n \log\log n}) }$ time and using $2^{O(\sqrt{\log n \log\log n})}$ energy. A reparametrized version would have $\poly(\log n)$ energy and $D \cdot n^{\eps}$ time complexity for an arbitrarily small constant $\eps>0$. This work was presented in the radio network model, which has an additional constraint on simultaneous messages, but for the BFS problem that was not a major obstacle because of the known decay protocol that manages such simultaneous transmissions and delivers at least one message to the node. This solution was the state of the art even with the basic message-passing model. In a follow-up work, Dani and Hayes~\cite{dani2022wake} essentially resolved the randomized version of the BFS problem by giving a randomized distributed algorithm with energy complexity $\poly(\log n)$ and time complexity $\tilde{O}(D)$. Their solution depended on some nice properties of randomized low-diameter graph decomposition~\cite{Miller2013ParallelShifts}, which are not known for deterministic decompositions. It remained open whether one can obtain similar bounds using a deterministic algorithm. Furthermore, this was just the unweighted version of the SSSP problem. It remained completely open whether one can obtain an algorithm with near-optimal time and energy complexity for the general case of SSSP. 

\paragraph{Our results on Energy-Efficient SSSP.} We give a deterministic algorithm that computes the exact SSSPs with $\tilde{O}(n)$ time complexity and $\poly(\log n)$ energy complexity. As is standard, in the presentation we assume that each edge $e$ has a weight $w(e)\in \{1, 2, \dots, \poly(n) \}$. More generally, if all weights are in $\{1, \dots, W\}$, the result generalizes with $\log (nW)$ factors in the bound. As a simpler special case of our SSSP, and in fact as a subroutine for the general case, we also provide an SSSP algorithm for BFS (i.e., SSSP in unweighted graphs) with $\tilde{O}(D)$ time complexity and $\poly(\log n)$ energy complexity. Here, $D$ denotes the hop diameter of the network.  
\begin{theorem}
There is a deterministic distributed algorithm that computes exact SSSPs with $\tilde{O}(n)$ time complexity and $\poly(\log n)$ energy complexity. Its special case for unweighted graphs computes an exact BFS with $\tilde{O}(D)$ time complexity and $\poly(\log n)$ energy complexity.    
\end{theorem}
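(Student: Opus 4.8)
The plan is to prove the theorem in two layers, building the weighted SSSP result on top of the unweighted (BFS) result, exactly mirroring ingredients (I) and (II) as described.

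\paragraph{Layer 1: Low-energy deterministic BFS.} First I would construct, with a low-energy deterministic distributed algorithm, a hierarchy of neighborhood covers: for each scale $i = 0, 1, \dots, O(\log D)$, a collection of clusters of radius $O(2^i \cdot \polylog n)$ such that every ball of radius $2^i$ is contained in some cluster, and each node belongs to only $\polylog n$ clusters at each scale. The point of the cover is to let a node that does not yet know its distance from the source coordinate a local awake/sleep schedule with its cluster-mates: within a cluster, one can afford a BFS-like wave that wakes nodes only $\polylog$ times because the cluster has small radius, and the bounded overlap ensures each node participates in few clusters. I would then run the BFS "wave" scale by scale: the source's ball of radius $2^i$ is certainly captured in some cluster at scale $i$, so the frontier can advance by $2^i$ hops using only intra-cluster communication, with each node waking $\polylog n$ times per scale and hence $\polylog n$ times overall; the total time is $\tilde O(D)$ since the frontier advances geometrically but the cover radii are only $\polylog$ times larger than the doubling scale. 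The deterministic construction of such covers with low energy is the crux here — this is the deterministic counterpart to Dani--Hayes, and I would base it on deterministic network decomposition / sparse cover machinery, taking care that the construction itself (not just its use) respects the energy budget, e.g.\ by building it recursively on the same doubling scales.

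\paragraph{Layer 2: Weighted SSSP via the recursive cutter.} With low-energy BFS in hand, I would implement the recursive ``distributed Dijkstra'' of \Cref{subsec:WithoutEnergy}. Let $W = \poly(n)$ be the max weight, $d = nW$ the crude distance upper bound. Using Nanongkai's rounding scheme, which reduces a $(1+\eps)$-approximate SSSP to $O(\log n)$ BFS instances on an $O(n)$-node unweighted graph, I get an approximate \emph{cutter}: a set $V_1 \supseteq \{v : dist(s,v) \le d\}$ with $V_1 \subseteq \{v : dist(s,v) \le (1+\eps)d/2\}$, computed in $\tilde O(n)$ time and $\polylog n$ energy (each of the $O(\log n)$ BFS calls costs $\polylog n$ energy by Layer 1). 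I recurse on $G[V_1]$ with bound $d/2$; when it returns, every $v$ with $dist(s,v) \le d/2$ knows $dist(s,v)$ exactly. I then form the second subproblem on $\{v : dist(s,v) > d/2\}$, making each such $v$ with a finalized neighbor $u$ (i.e.\ $dist(s,u)$ known $\le d/2$) act as a virtual source at offset $dist(s,u) + w(uv)$, and recurse with bound $d/2$. Each node is active in $O(\log n)$ subproblems along any root-to-leaf path of the recursion (it "falls through" to the second branch at most $O(\log n)$ times before its distance is pinned down), so the total energy is $\polylog n$ and the total time is $\tilde O(n)$, summing geometrically over the recursion depth $O(\log(nW)) = O(\log n)$.

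\paragraph{The main obstacle.} The hard part is coordination/synchrony in a low-energy regime: BFS waves require the participating nodes to be awake simultaneously, but a node does not know its distance, so it cannot know \emph{when} to wake. Inside Layer 1 this is handled by the neighborhood cover (small-radius clusters let cluster-internal schedules be pre-agreed), but across the Layer-2 recursion there is a second synchrony problem — the nodes of the second subproblem must learn when to begin, and a global broadcast would cost $\Theta(D)$ per subproblem, i.e.\ $\Omega(nD)$ total. As sketched in the introduction, I would resolve this by maintaining a maximal forest on each subproblem's node set and using it to propagate the "start" signal locally; this lets different subproblems at the same recursion level run at different speeds, so I would need a careful accounting argument showing the asynchrony never inflates the energy past $\polylog n$ or the time past $\tilde O(n)$ — essentially, bounding the total forest-communication charged to each node by $\polylog n$ and showing the makespan telescopes correctly. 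A secondary technical point is the CSSP generalization forced by the recursion (multiple virtual sources per subproblem), where the BFS of Layer 1 must be upgraded to the closest-source version while keeping the same energy bound; this should follow by running the cover-based wave from all sources at once, but the bounded-overlap bookkeeping needs rechecking.
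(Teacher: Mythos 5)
Your proposal is essentially the paper's approach: a hierarchy of deterministic sparse neighborhood covers at geometric scales, built bottom-up so the scale-$j$ BFS can construct the scale-$(j+1)$ cover (breaking the circular cover-needs-BFS dependence within the energy budget), underneath the recursive approximate-cutter reduction from weighted SSSP to $O(\log n)$ BFS instances with per-subproblem maximal-forest coordination and the CSSP generalization, all exactly as the paper does. The one presentational difference is in the BFS mechanism itself: in the paper the BFS always advances one hop per $\Theta(\polylog n)$ rounds at a fixed global speed, and the cover hierarchy governs only each node's wake/sleep schedule (a cluster at scale $B^j$ wakes with period $\approx B^j$ via convergecast/broadcast on its cluster tree, and is activated once BFS reaches its parent cluster, whose extra $B^{j+1}/2$ slack guarantees the child fully wakes before BFS arrives), rather than the frontier ``advancing by $2^i$ hops via intra-cluster communication per scale'' as you phrase it — but the accounting and the core idea are the same.
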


\paragraph{A few words about the method.} Our solution for SSSP relies directly on the first ingredient that we outlined in the previous subsection, the distributified variant of Dijkstra's algorithm. But that itself needs a solution for the BFS problem, and to have a deterministic algorithm, we need an energy-efficient deterministic algorithm for BFS. This is the second technical novelty in our paper. 

Our solution uses an idea similar to the approach of Dani and Hayes~\cite{dani2022wake}: we also use sparse neighborhood covers to coordinate the waking and sleeping of the nodes during the growth of BFS. Their algorithm relies on certain nice properties of the celebrated Miller, Peng, Xu~\cite{Miller2013ParallelShifts} randomized constructions of sparse neighborhood cover, and we are not aware of a deterministic construction with similar properties. We give a different scheme based on the deterministic sparse neighborhood cover construction of Rozhon and Ghaffari~\cite{Rozhon2020Polylogarithmic-timeDerandomization}, which basically needs only an energy-efficient subroutine for computing a BFS up to a certain distance. We break this seemingly vicious circle of dependencies between sparse neighborhood cover and BFS by setting up an appropriate recursion based on the distance traveled by BFS, and use some other ideas to stitch the solutions together. We hope our deterministic low-energy constructions of neighborhood covers might find applications in deterministic energy-efficient algorithms for a wider range of problems. 

Finally, we emphasize that in our SSSP approach, it is critical that the BFSs that need to be computed are in undirected graphs.\footnote{Some exact SSSP approaches use \textit{scaling} to work internally with approximations and boost them to exact distances---see, e.g., \cite{ghaffari2018improved,forster2018faster,chechik2020single,rozhovn2023parallel}. Unfortunately, all of these end up having to compute distances in certain directed graphs (more precisely, graphs with asymmetric weights along the two edge directions). This creates a major obstacle for extensions to energy-efficient computations, which rely on properties of neighborhood covers in undirected graphs.} This is because our energy-efficient BFS solution relies crucially on sparse neighborhood covers in undirected graphs, and it is not clear how to obtain similar results in directed graphs (indeed, it is not even clear what is the neighborhood cover concept for directed graphs, which would have the right properties). 




\section{Closest-Source Shortest Paths in $\tilde{O}(n)$ Time and $\tilde{O}(1)$ Congestion}
In this section, we discuss an algorithm that, given a set $S$ of sources, computes the distance $dist(S, v)=min_{s\in S} dist(s, v)$ for each node $v$, in $\tilde{O}(n)$ time and using $\tilde{O}(1)$ congestion per edge.

\subsection{Premilinaries}
\label{sec:prelim}
We first review two subroutines from prior work: (1) a rounding approach that provides $(1+\eps)$ approximation of shortest paths, which we review in \Cref{subsubsec:mssp-approx}, and (2) a distributed algorithm for computing a maximal spanning forest, which we review in \Cref{subsubsec:msf}.

\subsubsection{CSSP Approximation}
\label{subsubsec:mssp-approx}
Let us denote the set of sources by $S$. Also, for now, we assume that the weights of all edges are positive integers in $[1, \poly(n)]$. We will come back to the case of edges with weight $0$ later in \cref{theorem:main_zero}. 

If the maximum weighted distance from the sources is $O(n)$, we can run BFS from $S$, waiting for $t$ rounds for the edge with weight $t$. We cannot afford to do this if the weighted diameter is large. But we can use a rounding trick, first used in the distributed setting by Nanongkai~\cite{nanongkai2014distributed}, to approximate the distances from all nodes to $S$. The following statement abstracts the resulting algorithm. \fullOnly{A proof is provided in \Cref{app:prelim}, for self-containedness.}\shortOnly{A proof is provided in the full version of this paper, for self-containedness.} 

\begin{lemma}
    \label[lemma]{lemma:approx}
Consider a graph $G = (V, E)$ and a set of sources $S$. There is an algorithm that, given $\epsilon \in (0, 1)$ and an integer $W>0$, runs in time $O(\frac{n}{\epsilon})$ and congestion $O(1)$, and for each node $v$ outputs $dist'(S, v)$ with the following guarantees:

    \begin{itemize}
        \item If $dist'(S, v) \neq \infty$, then $dist(S, v) \le dist'(S, v) < dist(S, v) + \epsilon W$
        
        \item If $dist'(S, v) = \infty$, then $dist(S, v) > 2W$
    \end{itemize}
\end{lemma}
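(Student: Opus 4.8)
The plan is to implement the classic Nanongkai-style rounding scheme: scale and round the edge weights so that the weighted diameter of the relevant region becomes $O(n/\epsilon)$, then simply run a (truncated) BFS-style wavefront from $S$ that waits $t$ rounds on each edge of (rounded) weight $t$, aborting after $O(n/\epsilon)$ rounds. First I would fix the rounding parameter $\rho := \epsilon W / n$ and define, for each edge $e$, the rounded weight $\tilde{w}(e) := \lceil w(e)/\rho \rceil$. Running BFS from $S$ under weights $\tilde{w}$ for $\lceil (2W)/\rho \rceil + n = O(n/\epsilon)$ rounds, each node $v$ that is reached records the round $\tilde{d}(v)$ at which the wavefront first arrives, and then outputs $dist'(S,v) := \rho \cdot \tilde{d}(v)$; nodes never reached output $\infty$. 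The round complexity is $O(n/\epsilon)$ by construction, and since the BFS wavefront only sends a constant number of distinct "arrival" messages across each edge (an edge of rounded weight $t$ relays the signal once, $t$ rounds after its earlier endpoint is activated), the congestion is $O(1)$ per edge; some care is needed to make the BFS itself congestion-$O(1)$ rather than naively rebroadcasting, but this is standard — each node broadcasts its arrival time once to all neighbors.

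Next I would verify the two accuracy guarantees. For the lower bound $dist(S,v) \le dist'(S,v)$: since $\tilde{w}(e) = \lceil w(e)/\rho\rceil \ge w(e)/\rho$, any path has rounded length at least $1/\rho$ times its true length, so $\tilde{d}(v) \ge dist(S,v)/\rho$, hence $dist'(S,v) = \rho\,\tilde d(v) \ge dist(S,v)$. For the upper bound when $dist'(S,v) \ne \infty$: take a true shortest path $P$ from $S$ to $v$ with at most $n-1$ edges; its rounded length is $\sum_{e \in P}\lceil w(e)/\rho\rceil < \sum_{e\in P}(w(e)/\rho + 1) = dist(S,v)/\rho + |P| \le dist(S,v)/\rho + n$. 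So $\tilde{d}(v) < dist(S,v)/\rho + n$, giving $dist'(S,v) < dist(S,v) + n\rho = dist(S,v) + \epsilon W$, as required. I also need that this shortest path is actually traversed within the $O(n/\epsilon)$-round budget: when $dist(S,v)\le 2W$, the rounded path length is $< 2W/\rho + n = 2n/\epsilon + n$, which is within the budget, so such $v$ is indeed reached; contrapositively, if $v$ is not reached (i.e. $dist'(S,v)=\infty$), then $dist(S,v) > 2W$, which is the second guarantee.

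The one genuinely delicate point — and the step I expect to need the most care — is the interaction between the BFS truncation horizon and the claimed guarantees, because a node $v$ with $dist(S,v) \le 2W$ but whose \emph{rounded} shortest path slightly exceeds $2W/\rho$ could in principle be cut off; the slack of $+n$ in the horizon is exactly what absorbs the per-edge rounding error $\lceil\cdot\rceil$ accumulated over a path with $\le n$ edges, so the horizon must be set to at least $2W/\rho + n$ and not merely $2W/\rho$. A secondary subtlety is ensuring the BFS wavefront is implemented so that a node that has already been activated does not get "reactivated" to an earlier time (it cannot, since distances only increase along the wavefront, but one should note monotonicity explicitly), and that a node needs to stay listening only until its own arrival — this is fine in the non-energy-aware setting of this lemma but is precisely the difficulty that later sections address. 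Everything else is routine arithmetic on the rounding inequalities $x \le \lceil x\rceil < x+1$.
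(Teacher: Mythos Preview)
Your proposal is correct and follows essentially the same approach as the paper's proof: set the rounding granularity to $\epsilon W/n$, round edge weights up, run a truncated weighted BFS from $S$ for $O(n/\epsilon)$ rounds, and use the per-edge rounding error bound $\lceil x\rceil < x+1$ summed over at most $n$ edges to get the additive $\epsilon W$ error. The only cosmetic difference is the choice of truncation horizon (you use $\lceil 2W/\rho\rceil + n$, the paper uses $\lceil 3n/\epsilon\rceil$); both are $O(n/\epsilon)$ and both suffice for the contrapositive giving the second bullet.
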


\subsubsection{Maximal Spanning Forest}
\label{subsubsec:msf}
In our algorithm, we make use of a classic and well-known distributed algorithm for computing a maximal spanning forest~\cite{boruuvka1926jistem, gallager1983distributed}, which runs in near-linear time and with polylogarithmic congestion. The following statement summarizes this result:

\begin{theorem}
    \label[theorem]{theorem:spanning}[Boruvka's Algorithm\cite{boruuvka1926jistem, gallager1983distributed}]
There is a deterministic distributed algorithm that, on any undirected graph with $n$ nodes, runs in time $O(n\log{n})$ and congestion $\poly(\log{n})$ and computes a maximal spanning forest of the graph.
\end{theorem}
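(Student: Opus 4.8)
\noindent\emph{Proof plan.}
This is the classical Bor\r{u}vka / Gallager--Humblet--Spira fragment-merging scheme, so the plan is mostly to recall that algorithm and then verify that a careful distributed implementation respects the stated time and congestion budgets. Throughout we maintain a partition of $V$ into \emph{fragments}; initially every fragment is a singleton. Each fragment carries a rooted tree on its vertices whose edges are already committed to the output forest, together with a fragment identifier (say, the root's ID). We also fix, once and for all, a total order on $E$ obtained by comparing edges first by weight and then by the sorted pair of endpoint IDs; this global order is the device that keeps concurrent merges consistent.

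\noindent A \emph{phase} has four steps: (1) the root of each fragment broadcasts the fragment ID down its tree; (2) every node sends its fragment ID to all of its $G$-neighbors (one message per edge); (3) every node picks, among its incident edges leading to a different fragment, the one smallest in the total order, and these candidates are convergecast up the fragment tree so the root learns the fragment's minimum outgoing edge (MOE); (4) the root broadcasts the identity of the MOE back down, every chosen MOE is added to the output, and fragments joined by chosen edges are merged and the merged fragment is re-rooted. Every fragment that still has an outgoing edge is merged with at least one other such fragment, so the number of not-yet-finalized fragments at least halves each phase; hence after $\lceil \log_2 n \rceil$ phases every fragment has no outgoing edge, i.e.\ coincides with a connected component of $G$, and the committed edges form a spanning forest of each component — a maximal spanning forest of $G$.

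\noindent For correctness it remains to see that the committed set is always a forest. This is the standard cut argument: if a cycle were ever created, look at the phase that closed it and at the edge $e$ on the cycle that is largest in the total order among the edges added that phase and lying on the cycle; each of the two fragments containing an endpoint of $e$ had another cycle edge available as an outgoing edge that is smaller than $e$, so neither fragment would have chosen $e$, a contradiction. The re-rooting is well defined because, once the total order is fixed, following each fragment's chosen-MOE pointer yields a functional graph on fragments whose only cycles have length two (weights are non-increasing along a pointer chain and strictly decrease except across such a $2$-cycle); we root each merged fragment at a fixed endpoint of its unique $2$-cycle edge and reorient the old fragment trees toward that root.

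\noindent For the complexity: within a phase every broadcast/convergecast runs along fragment trees, which have at most $n$ vertices and thus depth at most $n-1$, so each of the four steps, and the re-rooting, costs $O(n)$ rounds; allotting $\Theta(n)$ rounds to each phase keeps all fragments phase-synchronized through the round counter, giving total time $O(n\log n)$. For congestion, each edge carries $O(1)$ messages per phase for the ID exchange of step (2); once an edge is not a fragment-tree edge and both its endpoints lie in one fragment it is ``internal'' and never used again; and a fragment-tree edge transmits only $O(1)$ messages per phase across steps (1), (3), (4) and re-rooting. Over $O(\log n)$ phases this is $O(\log n)$ messages per edge, within $\poly(\log n)$. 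The one genuinely delicate point — the reason this is not a one-line invocation of Bor\r{u}vka — is precisely the distributed merge: guaranteeing that simultaneously merging fragments neither close a cycle nor produce inconsistent roots (handled by the global edge order and the $2$-cycle structure), while confining all heavy communication to fragment-tree edges plus $O(1)$ rounds of neighbor exchange, which is what pins the congestion at $\poly(\log n)$.
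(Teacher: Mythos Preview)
Your proposal is correct and is precisely the classical Bor\r{u}vka/GHS fragment-merging argument; the paper itself does not supply a proof of this theorem but simply cites it as a known result, so there is nothing to compare against.
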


\subsection{An Algorithm Idea for CSSP, Assuming an \textit{Exact Cutter}}
\label{subsec:exactCutter}

Once again, we are given a weighted graph $G$ and a set of sources $S$. Imagine we had a way to solve the following problem: given $\mathcal{D}$, determine all nodes $v \in V$ with $dist(S, v) \le \mathcal{D}$. By this, we mean that after running this algorithm, every node should know whether it is at a distance at most $\mathcal{D}$ from the sources or not. Let us refer to such an algorithm as to \textbf{exact cutter}.

If we had an exact cutter, we could approach CSSP as follows: Let $\mathcal{D} = n\cdot \max {w_{e}} \leq \poly(n)$. This is an upper bound on the maximum distance from $S$ to other nodes. We could solve the CSSP problem recursively, as follows. Let $\mathcal{D}_1 = \frac{\mathcal{D}}{2}$. 

\begin{enumerate}
    \item Use the exact cutter to determine all nodes at the distance at most $\mathcal{D}_1$ from $S$. Let us denote this set of nodes as $V_1$.
    \item Solve the CSSP problem for $V_1$, recursively, by removing all nodes in $V\setminus V_1$.
    \item For every edge $(v, u)$ with $v \in V_1, u \in V \setminus V_1$, create an imaginary node $x_{vu}$ somewhere on the edge $(u, v)$, splitting it into two edges $(v, x_{vu})$ and $(x_{vu}, u)$, so that $w((v, x_{vu})) = \mathcal{D}_1 - dist(S, v)$. Let us denote this set of imaginary nodes as $X$. These nodes from $X$ form a ``cut'' at distance $\mathcal{D}_1$ from $S$. 
    \item Finally, solve the problem recursively for the set $X\cup (V \setminus V_1)$, with nodes $X$ as the new sources (and where we have removed nodes of $V_1$ from the problem). Notice that we do not really need the imaginary nodes to be physically present: For every imaginary node $x_{vu}$, node $u \in (V\setminus V_1)$ can simulate the messages of $x_{vu}$ (this node can send messages only to $u$). 
\end{enumerate}
The positive of this approach is that we trivially break the problem into two similar problems, solved sequentially, each with a $2$-factor smaller maximum distance, and each edge is active in only one of the two branches of the recursion. The only real cost of the recursion is the invocation of the cutter. One can see that each edge would be involved in $O(\log n)$ instances of the exact cutter. 

Of course, unfortunately, we do not have an exact cutter. But, this idea helps us to develop the actual algorithm. In the next subsection, we discuss our approach of using an approximate cutter, and how to manage the complexities arising from the inexactness of this cutter.

\subsection{The Algorithm for CSSP, using an Approximate Cutter}

\label{subsec:mainAlgo}
We do not have an exact cutter and we do not know how to determine exactly the set $V_1$ of nodes of distance at most $\mathcal{D}$ from $S$. But, we can use the algorithm from \cref{lemma:approx} to approximate the distances, and that gives us an overestimate of $V_1$ by including all nodes whose distance can be at most $\mathcal{D}$ given the approximate distance that we have. Notice that this set might be much larger than $V_1$. Furthermore, if we use this approximation, we lose the nice property that each node is involved in only one half of the recursion, because some nodes included in this overestimated $V_1$ will turn out to have a distance greater than $\mathcal{D}$ and thus will have to participate in the second half of the distance computation recursions as well. But we argue next that we can remedy these issues. 




\begin{definition} For any nonnegative integer $\tau$, we define the  \textbf{$\tau$-thresholded CSSP} problem as follows: Let $S$ be the set of source nodes. For each node $v$, if $dist(v, S) \le \tau$, then $v$ should output $dist(v, S)$. If $dist(v, S) > \tau$, then $v$ should output a special symbol $\infty$. 
\end{definition}


To compute CSSP we simply run a $\mathcal{D}$-thresholded CSSP for $\mathcal{D}=O(n\cdot maxW)$. Next, we describe how we perform the $\mathcal{D}$-thesholded CSSP Algorithm recursively, using a structure similar to the one described in \cref{subsec:exactCutter}, but via an approximate cutter.

\paragraph{The $\mathcal{D}$-thesholded CSSP Algorithm.} The algorithm is recursive:

\begin{enumerate}

    \item If $\mathcal{D} = 1$, we are in the base case. In this case, the only nodes with $dist(S, v) \le \mathcal{D}$ are the sources themselves and nodes that are connected to some source by an edge of weight $1$. All nodes can detect this in one round. If $\mathcal{D}>1$, proceed to the next steps.

    \item Compute all connected components of $G$, and a spanning tree for each of them, with an algorithm from \cref{theorem:spanning}. We solve this problem for each component independently.
    
    \item Choose $\epsilon = 0.5$, and compute an approximation $dist'(S, v)$ of distances from $S$ to each node $v$, using the algorithm from \cref{lemma:approx}. Let $V_1$ denote the set of nodes $v$ with $dist'(S, v) <\mathcal{D} + \epsilon \mathcal{D}$. For any $v \in V_1$, we have $dist(S, v) < \mathcal{D} + \epsilon \mathcal{D}$. Furthermore, for any $v \in V$ with $dist(S, v) \le \mathcal{D}$, we have $dist'(S, v) < \mathcal{D} + \epsilon \mathcal{D}$, so $v \in V_1$.
    
    \item Let $\mathcal{D}_1 = \frac{\mathcal{D}}{2}$. Perform a $\mathcal{D}_1$-thresholded CSSP for nodes $V_1$, with set $S$ of sources. Nodes from $V\setminus V_1$ do not participate in this recursion call. In every connected component $C$, collect with convergecast via the spanning tree of $C$, whether all nodes of $C$ which are in $V_1$ are done with this $\mathcal{D}_1$-thresholded CSSP (in particular, if $V_1$ is empty, no recursion is called, and already all nodes of $C\cap V_1$ are done with the $\mathcal{D}_1$-thresholded CSSP). Here by convergecast we mean that every node will tell its parent when it and its entire subtree is done with $\mathcal{D}_1$-thresholded CSSP.
    When the root of the spanning tree of $C$ detects that all nodes of $C$ are done with this $\mathcal{D}_1$-thresholded CSSP step, it chooses a start time for the final step of the algorithm. It sets it to be in $\Theta(|C|)$ rounds into the future and then spreads this starting round number via the spanning tree of $C$.
    
    \item 
    Let $V_2$ denote the set of all nodes $v$ with $d(S, v) \le \mathcal{D}_1$. Notice that after previous steps, all nodes know whether they are in $V_2$ or not. For every edge $(v, u)$ with $v \in V_2, u \in V_1 \setminus V_2$, create an imaginary node $x_{vu}$ somewhere on the edge $(u, v)$, splitting it into two edges $(v, x_{vu})$ and $(x_{vu}, u)$, so that $w((v, x_{vu})) = \mathcal{D}_1 - dist(S, v)$. Let us denote this set of imaginary nodes as $X$. These nodes from $X$ form a ``cut'' at distance $\mathcal{D}_1$ from $S$. 

    \item Finally, perform $\mathcal{D}_1$-thresholded CSSP for nodes from $X\cup (V_1 \setminus V_2)$ with set $X$ of sources. Nodes from $V_2$ do not participate in this recursion call. For every imaginary node $x_{vu}$, node $u \in V_1 \setminus V_2$ can simulate the messages of $x_{vu}$ (it only ever sends messages to $u$). 

    For any node $v \in V \setminus V_2$, we know that $dist(S, v) = dist(X, v) + D_1$. Since $\mathcal{D} = 2\mathcal{D}_1$, if $dist(X, v) \le \mathcal{D}_1$, we have $dist(S, v) = \mathcal{D}_1 + dist(X, v) \le \mathcal{D}$, and if $dist(X, v) > \mathcal{D}_1$, we have $dist(S, v) = \mathcal{D}_1 + dist(X, v) > \mathcal{D}$. Hence, the algorithm computes the correct output.
\end{enumerate}

\subsection{Analysis of the Algorithm}

\label{subsec:analysis}

Let us denote the running time of $\mathcal{D}$-thresholded BFS for a set of nodes $V$ and a set of sources $S$ by $T(V, S, \mathcal{D})$. Then, the recurrence from \cref{subsec:mainAlgo}, keeping all the notations from that section, would have the following form:

\begin{align*}
T(V, S, \mathcal{D}) = &\underbrace{O(|V|\log{|V|})}_\text{Connected components} + \underbrace{O(|V|)}_\text{Approximation} +  \underbrace{T(V_1, S, \mathcal{D}_1)}_\text{Call on $V_1$} \\ \\&+  \underbrace{O(|V|)}_\text{Convergecast} +  \underbrace{T(X \cup (V_1 \setminus V_2), X, \mathcal{D}_1)}_\text{Call on $V_1 \setminus V_2$}    
\end{align*}

Remember that all nodes in $X$ are simulated by the nodes in $V_1 \setminus V_2$, so let us replace $X \cup (V_1 \setminus V_2)$ with $V_1 \setminus V_2$ in the last term, and simplify:

$$T(V, S, \mathcal{D}) = O(|V|\log{|V|}) + T(V_1, S, \mathcal{D}_1) + T(V_1 \setminus V_2, X, \mathcal{D}_1)$$

\begin{lemma}
    \label[lemma]{lemma:appear}
    Every $v \in V$ appears in $V'$ for only $O(\log{\mathcal{D}})$ subproblems $T(V', S', \mathcal{D}')$.
\end{lemma}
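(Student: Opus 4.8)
The plan is to track, for a fixed node $v$, the sequence of subproblems $T(V', S', \mathcal{D}')$ in the recursion tree whose vertex set $V'$ contains $v$. I would argue two things: first, that the distance parameter $\mathcal{D}'$ strictly decreases (by a factor of $2$) along any root-to-leaf path in the recursion tree, so the depth of the recursion is $O(\log \mathcal{D})$; and second, that along any single path through the recursion tree, $v$ can ``branch'' into both children only a bounded number of times — in fact, I claim $v$ follows a unique path once we account for the structure.

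\medskip

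More carefully, here is the key observation. Consider a subproblem $T(V, S, \mathcal{D})$ containing $v$. It spawns two recursive calls: the ``left'' call $T(V_1, S, \mathcal{D}_1)$ and the ``right'' call $T(V_1 \setminus V_2, X, \mathcal{D}_1)$, where $V_2 = \{u : dist(S,u) \le \mathcal{D}_1\}$. Now: if $v \notin V_1$, then $v$ appears in neither child. If $v \in V_1$ and $dist(S,v) \le \mathcal{D}_1$, i.e. $v \in V_2$, then $v$ appears in the left call but \emph{not} the right call (since the right call is on $V_1 \setminus V_2$). If $v \in V_1$ but $dist(S,v) > \mathcal{D}_1$, i.e. $v \in V_1 \setminus V_2$, then $v$ appears in the right call but \emph{not} the left call (since the left call, being a $\mathcal{D}_1$-thresholded CSSP, would have $v$ output $\infty$ — but more to the point, one should check $v$ genuinely only needs to participate in the right branch; here I would need to confirm that $v \in V_1$ with $dist(S,v) > \mathcal{D}_1$ does not also sit in the left recursion's vertex set, which it does by construction since the left call is literally on all of $V_1$).

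\medskip

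This last point is the subtlety and the main obstacle: the left recursive call is on \emph{all} of $V_1$, which includes nodes with $dist(S,v) > \mathcal{D}_1$, so naively $v$ \emph{does} appear in both children. So the clean ``each node goes to exactly one child'' statement fails at the level of vertex sets. The fix — and what I expect the real argument to be — is to observe that when $v \in V_1 \setminus V_2$ participates in the left call $T(V_1, S, \mathcal{D}_1)$, it participates only as an ``inactive'' node that quickly outputs $\infty$ and does not recurse further: within that left subproblem, the base-case/threshold logic means $v$ will land in a $V_1$-overestimate at the next level but then, since its true distance exceeds the threshold, it gets pruned within $O(1)$ further levels — actually, I should be more careful. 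The honest accounting is: $v$ appears in the left child's vertex set, but then inside that subproblem, at each subsequent level $v$ either gets cut (placed in $V \setminus V_1$, exiting) or stays; and since the threshold halves each level while $dist(S,v)$ is fixed and exceeds the original $\mathcal{D}_1$, after $O(\log(\mathcal{D}_1 / (dist(S,v) - \mathcal{D}_1)))$... this is getting complicated, which signals that the intended proof is instead the simple depth-counting one below.

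\medskip

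So the argument I would actually commit to: prove by structural induction on the recursion that the total number of subproblems whose vertex set contains $v$ is $O(\log \mathcal{D})$, using the following two facts together. (1) The recursion depth is $O(\log \mathcal{D})$ since $\mathcal{D}$ halves at each level and the base case is $\mathcal{D} = 1$. (2) At each level of the recursion tree, $v$ belongs to the vertex set of \emph{at most one} subproblem at that level. Fact (2) is the crux: I would prove it by induction on the level, showing that the vertex sets of all subproblems at a given level are pairwise disjoint. This holds because $V_1$ and $V_1 \setminus V_2$ need not be disjoint — so instead I would reformulate fact (2) as: the subproblems at a given level have vertex sets that, while possibly overlapping, are such that $v$ is ``active'' in at most one. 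Concretely, replace ``$V_1$'' in the left call by the set of nodes $v \in V_1$ with $dist(S, v) \le \mathcal{D}_1$ — wait, that is exactly $V_2$. Indeed: the left $\mathcal{D}_1$-thresholded CSSP on $V_1$ only produces meaningful (non-$\infty$) output for nodes in $V_2$, and one checks from the algorithm that nodes in $V_1 \setminus V_2$ inside that call get eliminated at the very first cutter step of \emph{that} recursive invocation (their approximate distance is too large), so they contribute $O(1)$ extra appearances. Combining: each node appears ``essentially once per level'' plus $O(1)$ slack per level, over $O(\log \mathcal{D})$ levels, giving the bound. I would formalize this by defining, for $v$ fixed, a potential equal to the current threshold $\mathcal{D}'$ and showing it at least halves between consecutive appearances of $v$ except for at most a constant number of ``stalls,'' so the number of appearances is $O(\log \mathcal{D})$.
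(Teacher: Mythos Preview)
Your proposal circles around the right intuition---$O(1)$ appearances per level over $O(\log\mathcal{D})$ levels---but the argument you actually commit to has a genuine gap. You claim that nodes $v \in V_1 \setminus V_2$ participating in the left call $T(V_1, S, \mathcal{D}_1)$ ``get eliminated at the very first cutter step of that recursive invocation (their approximate distance is too large).'' This is false: such a node has $dist(S,v) > \mathcal{D}_1$, hence $dist'(S,v) > \mathcal{D}_1$, but the cutter inside the left call keeps all nodes with $dist'(S,v) < \mathcal{D}_1 + \epsilon\mathcal{D}_1 = 1.5\mathcal{D}_1$. So any node with $\mathcal{D}_1 < dist(S,v) < 1.5\mathcal{D}_1$ may well survive and recurse further into \emph{both} grandchildren of the left call. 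Your fallback ``potential equals current threshold, halves except for $O(1)$ stalls'' is not a well-defined argument either, since $v$ can appear in several subproblems at the \emph{same} threshold $\mathcal{D}'$, so the potential does not move between those appearances.

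The paper's proof sidesteps branch-tracking entirely and instead does a distance-interval accounting. It observes that every subproblem $T(V', S', \mathcal{D}')$ in the recursion tree corresponds to an offset $k\mathcal{D}'$ from the original source set $S$ (for some nonnegative integer $k$), and that---because the parent's approximate cutter with $\epsilon = 0.5$ only admits nodes within $1.5 \cdot 2\mathcal{D}' = 3\mathcal{D}'$ of the parent's sources---every node in $V'$ has $dist(S,v) \in [k\mathcal{D}',\, k\mathcal{D}' + 3\mathcal{D}')$. Since the offsets at level $\mathcal{D}'$ are multiples of $\mathcal{D}'$, any fixed value $dist(S,v)$ lies in at most $3$ such intervals, so $v$ appears in at most $3$ subproblems per level. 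With $O(\log\mathcal{D})$ levels this gives the bound. This is the missing idea in your write-up: you need to bound, for each level, how many subproblems' distance windows can cover $dist(S,v)$, rather than trying to argue that $v$ is quickly pruned from ``wrong'' branches.
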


\begin{proof}
    Let us look more carefully at our recursion. Subproblem $T(V', S', \mathcal{D}')$ actually denotes doing BFS from some distance $k\mathcal{D}'$ from the original sources $S$ up to distance $\mathcal{D}'$, which recurses to doing BFS from distance $k\mathcal{D}'$ up to distance $\frac{\mathcal{D}'}{2}$, and then from distance $k\mathcal{D}' + \frac{\mathcal{D}'}{2}$ up to distance $\frac{\mathcal{D}'}{2}$ again. Also, note that at every level except the very highest, all nodes in $V'$ have a distance from $S$ in range $[k\mathcal{D}', k\mathcal{D}' + 2\mathcal{D}' + \epsilon \cdot 2\mathcal{D}') = [k\mathcal{D}', k\mathcal{D}' + 3\mathcal{D}')$: when we recurse to $T(V_1, S, \mathcal{D}_1)$, we only have nodes at distance $\le \mathcal{D} + \epsilon \mathcal{D} = 2\mathcal{D}' + \epsilon \cdot 2\mathcal{D}'$ from the sources.

    It follows that every node can appear in at most $3 = O(1)$ recursive subproblems per every level $\mathcal{D}'$. So, every node appears in $O(\log{\mathcal{D}})$ subproblems in total.
\end{proof}

\begin{corollary}
    \label[corollary]{corollary:appearSum}
    The total sum of $|V'|$ over all recursive subproblems $T(V', S', \mathcal{D}')$ is $O(|V| \log{\mathcal{D}})$.
\end{corollary}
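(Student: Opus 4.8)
The statement to prove, \Cref{corollary:appearSum}, is an immediate consequence of \Cref{lemma:appear}, so the plan is short. First I would observe that the total sum in question can be rewritten by swapping the order of summation: instead of summing $|V'|$ over all subproblems $T(V', S', \mathcal{D}')$, I would sum, over each node $v \in V$, the number of subproblems in whose vertex set $v$ appears. That is,
\[
\sum_{T(V', S', \mathcal{D}')} |V'| \;=\; \sum_{v \in V} \bigl|\{\, T(V', S', \mathcal{D}') : v \in V' \,\}\bigr|.
\]
This is just a double-counting identity: each pair $(v, \text{subproblem containing } v)$ is counted exactly once on either side.

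Next I would apply \Cref{lemma:appear} directly to bound the inner quantity: every node $v$ appears in only $O(\log \mathcal{D})$ subproblems. Substituting this bound into the right-hand side of the identity above gives
\[
\sum_{v \in V} O(\log \mathcal{D}) \;=\; O(|V| \log \mathcal{D}),
\]
which is exactly the claimed bound. I would remark that here $\mathcal{D}$ denotes the top-level threshold $\mathcal{D} = O(n \cdot \max_e w_e) = \poly(n)$, so this bound is also $O(|V| \log n)$ if one prefers, but stating it in terms of $\mathcal{D}$ matches the recursion's parametrization.

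There is essentially no obstacle here — the entire content was already extracted into \Cref{lemma:appear}, and this corollary is just the aggregated form used to control the total work across the recursion tree (each subproblem contributes work linear, up to logs, in its vertex-set size, so the total is $\tilde{O}(|V| \log \mathcal{D})$ per ``layer'' of cost). The only thing worth being slightly careful about is making sure the double-counting is over the same collection of subproblems on both sides and that the recursion tree is finite so the sums are well-defined; the finiteness follows because $\mathcal{D}$ halves at each level and the base case $\mathcal{D} = 1$ terminates the recursion, giving $O(\log \mathcal{D})$ levels, each with finitely many subproblems.
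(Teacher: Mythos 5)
Your proof is correct and follows essentially the same argument as the paper: both apply \Cref{lemma:appear} and count pairs (node, subproblem) to conclude the bound. You simply spell out the double-counting identity more explicitly than the paper's one-line proof.
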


\begin{proof}
    Since by \cref{lemma:appear} each node appears in $O(\log{D})$ subproblems, the total sum of $|V'|$ over all subproblems is $O(|V|\log{D})$.    
\end{proof} 

\begin{theorem}
    \label[theorem]{theorem:main_no_zero}
    Consider a graph $G = (V, E)$ and a set of sources $S$. If the weights of all edges in $G$ are positive integers bounded by $O(\poly(n))$, there exists an algorithm that computes all distances $dist(S, v)$, in time $O(n\log^2{n})$ and congestion $O(\log^2{n})$.
\end{theorem}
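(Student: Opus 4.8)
The plan is to obtain correctness by induction on $\mathcal{D}$, and the two complexity bounds by unrolling the recurrence derived above and invoking \Cref{lemma:appear} and \Cref{corollary:appearSum}, which are already available. Run the $\mathcal{D}$-thresholded CSSP with $\mathcal{D} = O(n\cdot\max_e w_e) = \poly(n)$ (rounding $\mathcal{D}$ up to a power of two costs only a constant factor and makes $\mathcal{D}=2\mathcal{D}_1$ hold exactly), so the recursion depth is $O(\log\mathcal{D}) = O(\log n)$. For correctness, the base case $\mathcal{D}=1$ is done in one round, and the inductive step rests on two facts. First, \emph{no relevant node is dropped}: by \Cref{lemma:approx} with $\epsilon=\tfrac12$ and $W=\mathcal{D}$, the set $V_1$ contains every $v$ with $dist(S,v)\le\mathcal{D}$, hence every $v$ with $dist(S,v)\le\mathcal{D}_1$; since all weights are positive, distances are monotone along shortest paths, so a shortest $S$--$v$ path with $dist(S,v)\le\mathcal{D}_1$ stays inside $V_1$ and $dist_{G[V_1]}(S,v)=dist(S,v)$, which makes the first recursive call (valid since $\mathcal{D}_1<\mathcal{D}$) correctly determine $V_2=\{v:dist(S,v)\le\mathcal{D}_1\}$ together with the distances of its members. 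Second, \emph{the cut is faithful}: for $v$ with $dist(S,v)>\mathcal{D}_1$, any shortest $S$--$v$ path has a last vertex $v'\in V_2$ followed by a vertex $u'\notin V_2$, and $dist(S,u')\le dist(S,v)\le\mathcal{D}$ forces $u'\in V_1\setminus V_2$, so the imaginary source $x_{v'u'}$ sits at weighted distance exactly $\mathcal{D}_1$ from $S$ along that path and one checks $dist_{G'}(X,v)=dist(S,v)-\mathcal{D}_1$, where $G'$ is the graph of the second recursive call. Hence step~6 returns $\mathcal{D}_1 + dist(X,v)$ when $dist(X,v)\le\mathcal{D}_1$ and $\infty$ otherwise, which is the correct $\mathcal{D}$-thresholded output because $\mathcal{D}=2\mathcal{D}_1$.

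The subtle part, and the main obstacle, is showing that the distributed coordination actually realizes the clean recurrence $T(V,S,\mathcal{D}) = O(|V|\log|V|) + T(V_1,S,\mathcal{D}_1) + T(V_1\setminus V_2,X,\mathcal{D}_1)$, i.e.\ that the nodes of each subproblem begin every subroutine and every recursive call in lock-step. The spanning-forest step (\Cref{theorem:spanning}, $O(|V|\log|V|)$ rounds) and the approximation step (\Cref{lemma:approx} with $\epsilon=\tfrac12$, $O(|V|)$ rounds) run for a fixed worst-case number of rounds and can be padded so that all nodes exit them simultaneously; the one delicate transition is the start of step~6. There, the convergecast over the component spanning tree built in step~2 informs the root of each component $C$ when the \emph{entire} first recursive call on $C\cap V_1$ has completed (including the degenerate case $C\cap V_1=\emptyset$); the root then picks a start round $\Theta(|C|)$ in the future and broadcasts it down the tree, which reaches every node of $C$ in time because the tree has at most $|C|$ nodes. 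This adds only $O(|C|)=O(|V|)$ rounds, absorbed in the local $O(|V|\log|V|)$ term. Different components, and different subproblems at the same recursion level, may run at different speeds, but this is harmless: each component only ever waits on its own descendants through its own convergecast, never on a sibling.

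Granting the recurrence, the complexity is a short computation. Unrolling it, the running time of the top-level call equals $\sum_P O(|V'_P|\log|V'_P|)$ summed over all recursive subproblems $P$ (the local term covering the spanning forest, the approximation, the convergecast, and the $\Theta(|V'_P|)$ start-time wait). Since every $|V'_P|\le n$, this is at most $O(\log n)\sum_P|V'_P|$, which by \Cref{corollary:appearSum} is $O(\log n)\cdot O(n\log\mathcal{D}) = O(n\log^2 n)$ as $\mathcal{D}=\poly(n)$. For congestion, \Cref{lemma:appear} says every node---hence every edge---belongs to $O(\log n)$ subproblems, and inside one subproblem an edge carries $O(\log n)$ messages (dominated by the maximal-spanning-forest computation; the approximation and the convergecast cost $O(1)$ per edge), so each edge carries $O(\log^2 n)$ messages over the whole execution. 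Together with correctness, this yields the claimed $O(n\log^2 n)$ time and $O(\log^2 n)$ congestion.
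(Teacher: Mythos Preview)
Your proof is correct and follows essentially the same approach as the paper: set $\mathcal{D}$ to a power of two at least $n\cdot\max_e w_e$, then bound time via \Cref{corollary:appearSum} (summing the $O(|V'|\log|V'|)$ local work over all subproblems) and congestion via \Cref{lemma:appear} (each edge lies in $O(\log\mathcal{D})$ subproblems, each costing $O(\log n)$ congestion from the spanning-forest routine). The additional material you supply on correctness and on the synchronization of step~6 is accurate and matches what the paper establishes in the algorithm description (steps~3--6) rather than in the theorem's proof proper.
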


\begin{proof}
We run $T(V, S, \mathcal{D})$ with $\mathcal{D} = 2^L$, where $L$ is the smallest integer such that $2^L \ge n\cdot maxW$ is an upper bound on the distances from $S$ to other nodes. Since all weights are $O(\poly(n))$, we have $O(\log{\mathcal{D}}) = O(\log{n})$. Next we analyze the recursion: by \cref{corollary:appearSum}, the sum of sizes of $|V'|$ over all subproblems is $O(|V|\log{\mathcal{D}})$. So, the sum of $O(|V'|\log{|V'|})$ is $O(|V|\log{\mathcal{D}}\log{|V|})$. Therefore, $T(V, S, \mathcal{D}) = O(|V|\log{\mathcal{D}}\log{|V|})$. Now we analyze the congestion. Each edge is involved in $O(\log{\mathcal{D}})$ subproblems, and in $O(\log{\mathcal{D}})$ computations of connected components and their trees. From \cref{lemma:approx} and \cref{theorem:spanning}, it follows that the congestion is $O(\log{n} \log{\mathcal{D}}) = O(\log^2{n})$.
\end{proof}

\paragraph{Extension to the case of edges with zero weight.} The following statement extends the results to the case with $0$ weight edges. \fullOnly{Since this is a standard idea, we defer the proof to \Cref{app:zero-weight}.}\shortOnly{Since this is a standard idea, we defer the proof to the full version.}
\label{subsec:zeroWeight}

\begin{theorem}
    \label[theorem]{theorem:main_zero}
    Consider a graph $G = (V, E)$ and a set of sources $S$. If the weights of all edges in $G$ are nonnegative integers bounded by $O(\poly(n))$, \textbf{and can be zero}, there exists an algorithm that computes all distances $dist(S, v)$, in time $O(n\log^2{n})$ and congestion $O(\log^2{n})$.
\end{theorem}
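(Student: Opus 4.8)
The plan is to reduce the zero-weight case to the positive-weight case already handled in \Cref{theorem:main_no_zero}, using the standard trick of rescaling the weights. Concretely, given a graph $G$ with nonnegative integer weights bounded by some $M = O(\poly(n))$, I would construct an auxiliary instance $G'$ on the same vertex set and edge set, with new weights $w'(e) = (n+1)\cdot w(e) + 1$. Since $w(e) \ge 0$, every new weight $w'(e)$ is a positive integer, and since $w(e) \le M$ and $M = O(\poly(n))$, we have $w'(e) = O(\poly(n))$ as well, so \Cref{theorem:main_no_zero} applies to $G'$ and computes $dist_{G'}(S,v)$ for every $v$ in time $O(n\log^2 n)$ and congestion $O(\log^2 n)$. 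This reduction is purely local: each node can compute its incident new weights from the old ones with no communication, and at the end recover the old answer from the new one, so the time and congestion bounds are inherited verbatim.

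The key structural observation is that a shortest path in $G'$ from $S$ to $v$ uses the same number of edges, up to the choice among shortest paths, as a minimum-hop shortest path in $G$. Precisely: for any simple path $P$ from $S$ to $v$ with $h(P)$ edges and $G$-weight $w(P)$, its $G'$-weight is $w'(P) = (n+1)\, w(P) + h(P)$. Among all paths from $S$ to $v$, the one minimizing $w'(P)$ first minimizes $w(P)$ (because the additive $h(P)$ term is at most $n < n+1$ and so cannot compensate for even a unit increase in the dominant term, using that any simple path has at most $n-1 \le n$ edges and WLOG shortest paths are simple since weights are nonnegative), and among those minimizes $h(P)$. Consequently, writing $h^*(S,v)$ for the minimum number of edges over all $G$-shortest $S$--$v$ paths, we get the exact identity
\begin{equation*}
dist_{G'}(S,v) = (n+1)\cdot dist_G(S,v) + h^*(S,v),
\end{equation*}
and since $0 \le h^*(S,v) \le n < n+1$, each node recovers $dist_G(S,v) = \lfloor dist_{G'}(S,v) / (n+1) \rfloor$ locally. (Unreachable nodes are detected as $dist_{G'}(S,v) = \infty$.)

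I would then carry out the steps in this order: (i) state the rescaling $w'(e) = (n+1)w(e)+1$ and verify $w'$ is a positive integer weighting bounded by $O(\poly(n))$; (ii) prove the path-weight identity $w'(P) = (n+1)w(P)+h(P)$ and deduce that $G'$-shortest paths are exactly the minimum-hop $G$-shortest paths, hence the displayed formula for $dist_{G'}$; (iii) invoke \Cref{theorem:main_no_zero} on $G'$ to obtain all $dist_{G'}(S,v)$ within the claimed bounds; (iv) have each node output $\lfloor dist_{G'}(S,v)/(n+1)\rfloor$, or $\infty$ if $dist_{G'}(S,v) = \infty$, and note this costs zero extra rounds and no communication. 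The only mild subtlety — and the step I would be most careful about — is the boundary arithmetic: ensuring the multiplier is strictly larger than the maximum possible hop count so that the floor operation cleanly separates the ``$w(P)$'' information from the ``$h(P)$'' information; using $n+1$ against a simple-path hop bound of $n-1$ gives comfortable slack. Everything else is a direct black-box application of the positive-weight algorithm, so the time bound $O(n\log^2 n)$ and congestion bound $O(\log^2 n)$ transfer immediately.
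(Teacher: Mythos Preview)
Your proposal is correct and follows essentially the same approach as the paper: a local rescaling of weights that makes every edge weight a positive integer while preserving the true distances up to a floor division, followed by a black-box call to \Cref{theorem:main_no_zero}. The only cosmetic difference is the exact rescaling chosen---the paper maps $0\mapsto 1$ and $w>0\mapsto nw$ and recovers $\lfloor x/n\rfloor$, whereas you use the uniform $w'(e)=(n+1)w(e)+1$ and recover $\lfloor x/(n+1)\rfloor$; both choices work for the same reason (the additive slack is strictly less than the multiplier).
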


\section{Closest-Source Shortest Paths in $\tilde{O}(n)$ Time and $\tilde{O}(1)$ Energy}
In this section, we discuss an algorithm for the closest-source shortest paths problem. Given a set $S$ of multiple sources, the algorithm computes the distance $dist(S, v)=min_{s\in S} dist(s, v)$ for each node $v$. It uses $\tilde{O}(n)$ time, and $\tilde{O}(1)$ energy per node. In the first six subsections of this section, we consider only unweighted graphs (so that all edges have weight $1$). In \cref{subsec:BFSwithcovers}, we develop a way to perform a $\mathcal{D}$-thresholded BFS, given a layered sparse $\mathcal{D}$-cover (defined in \cref{subsubsec:forest}). In \cref{subsec:energybfs}, we explain how to perform a multi-source BFS even without being given the sparse covers. Finally, in \cref{subsec:energymsps} we extend this algorithm to CSSP in weighted graphs.

\subsection{Preliminaries}

\subsubsection{Collecting Information in a Tree in the \energy Model}

\label{subsubsec:collecting}
Assume that there is a rooted tree of depth $d$, where all nodes know their children, parents, and depth. Let us denote the depth of a node $u$ by $depth(u)$. Let us say that these nodes want to collect some information via convergecast, and then broadcast it down to all the nodes. How would they do this with a small energy cost?  

We will utilize the assumption that all nodes know their depth in the tree. Let us define a \textbf{period} $p$ of the tree, and two procedures: 

\textbf{Convergecast.} We can propagate any information up as follows: node $v$ will wake up at rounds $p - depth(v) - 1, p - depth(v), 2p - depth(v) - 1, 2p - depth(v), \ldots$, and, if it has any information to propagate, it will send it to the parent. 

\textbf{Broadcast.} We can propagate any information down as follows: node $v$ will wake up at rounds $depth(v), depth(v) + 1, p + depth(v), p + depth(v) + 1, 2p + depth(v), 2p + depth(v), \ldots$, and, if it has any information to propagate, it will send it to all its children. 

Let us call a node \textbf{active} for a cluster $C$, if it is involved in both convergecast and broadcast for this cluster. Then, by the time $t$ any node $v$ received any signal, and all the nodes of $C$ are active, then all nodes of $C$ will receive this signal by the round $t + O(d + p)$. Additionally, every node wakes up only in $\Theta(\frac{1}{p})$ fraction of rounds.

\subsubsection{Rooted Spanning Tree with Low Energy}
An adaptation of the classic Boruvka approach to the minimum spanning tree~\cite{boruuvka1926jistem, gallager1983distributed} gives an $\tilde{O}(n)$-time and low-energy distributed algorithm for computing a maximal forest~\cite{augustine2022MSTAwake}.
\begin{theorem}
    \label[theorem]{theorem:spanningenergy}[Boruvka\cite{boruuvka1926jistem}, adapted by Augustine et al.~\cite{augustine2022MSTAwake}]
    There is a deterministic distributed algorithm that, given any $n$-node graph, computes a maximal forest of it using $\tilde{O}(n)$ time and $\poly(\log n)$ energy. In particular, $O(n\log^2{n})$ time and $O(\log^2{n})$ energy is sufficient.
\end{theorem}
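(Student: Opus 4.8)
The plan is to recall the classical Boruvka/Gallager-Humblet-Spira structure and then explain how to execute each of its phases in the \energy model with low energy, using the tree-communication primitives of \cref{subsubsec:collecting}. Recall that Boruvka's algorithm proceeds in $O(\log n)$ phases; at the start of each phase we have a partition of the vertices into \emph{fragments}, each of which is an already-agreed-upon tree in the forest, and each fragment has a designated root (leader). In one phase, every fragment finds a minimum-weight outgoing edge (an edge to a vertex of a different fragment), these edges are added to the forest, and the fragments that get merged by these edges are coalesced into new larger fragments with new roots. Since each phase at least halves the number of (non-maximal) fragments, $O(\log n)$ phases suffice; when a fragment has no outgoing edge, it is a maximal connected component and stops.

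The main work is to implement one phase with $\poly(\log n)$ amortized energy and $\tilde O(n)$ total time, so that over $O(\log n)$ phases we get $O(n\log^2 n)$ time and $O(\log^2 n)$ energy. First I would maintain, as an invariant, that each fragment carries a rooted spanning tree of known depth with every node knowing its parent, children, and depth; this is exactly the setting required by \cref{subsubsec:collecting}. To find the minimum outgoing edge of a fragment: each node locally computes the minimum-weight edge leading to a node whose fragment identifier differs from its own, then we run the convergecast of \cref{subsubsec:collecting} up the fragment tree to deliver the global minimum (and the identity of the endpoint edge) to the root, followed by the broadcast to tell all fragment nodes which edge was chosen and which fragment they are merging with. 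Choosing the period $p=\Theta(\log n)$ (or any $\poly(\log n)$) and recalling that a fragment of $k$ nodes has depth $O(k)$ — or can be rebalanced to depth $O(\log k)$ by a standard pointer-jumping/relabeling step — each such convergecast-broadcast round costs $O(\mathrm{depth}+p)$ time and $O(1/p)$ fraction of rounds awake per node, i.e. $O(\mathrm{depth}/p + 1)$ energy per participating node per phase. Summed over $O(\log n)$ phases this is $O(\log^2 n)$ energy. The merging step — orienting the chosen inter-fragment edges, breaking the symmetric ``both fragments picked the same edge'' ties by comparing fragment IDs, electing a new leader for each merged group, and recomputing a rooted tree of the merged fragment with correct depths — is again carried out with a constant number of convergecast/broadcast passes along the (old and newly-linked) fragment trees, plus local pointer manipulations; it contributes only another constant factor.

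The step I expect to be the main obstacle is the \textbf{coordination of asynchrony-of-progress across different fragments within the same phase}: fragments of different sizes finish their convergecast/broadcast passes at different times, yet a node must know when it is safe to begin using a neighbor's \emph{updated} fragment identifier (otherwise it might compute its minimum outgoing edge against a stale label, or worse, against a neighbor that is simultaneously asleep and miss it entirely, since messages to sleeping nodes are lost). The clean fix is to run the whole algorithm on a global schedule of ``super-rounds'': phase $i$ is allotted a fixed window of $\Theta(n)$ rounds (an a-priori upper bound on the depth of any fragment plus the period), every node knows the phase boundaries from its $O(\log n)$-bit view of the round counter, and within a phase window all the convergecast/broadcast wake-up patterns of \cref{subsubsec:collecting} are aligned to that window; inter-fragment queries are only issued and answered at the phase boundary, when all fragments are guaranteed to have published their new, consistent identifiers and trees. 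This padding costs an $O(\log n)$ factor in time but keeps the energy at $O(\log^2 n)$, matching the stated bound; the remaining details are the routine pointer-chasing of Boruvka and the routine accounting already sketched.
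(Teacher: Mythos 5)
The paper does not itself prove this theorem; it cites it as the adaptation of Boruvka by Augustine et al.~\cite{augustine2022MSTAwake}. Nonetheless your sketch has a genuine gap in the energy accounting. You set the convergecast/broadcast period to $p=\Theta(\log n)$, but a fragment of $k$ nodes can have spanning-tree depth $\Theta(k)$ (e.g.\ a path), and this depth \emph{cannot} be reduced in the message-passing model: pointer-jumping shortens only a logical tree, whereas here a node's ``grandparent'' is generally not adjacent to it in $G$, so messages must still traverse the physical tree edges. With $p=\Theta(\log n)$ and depth $d=\Theta(n)$, the primitive of \cref{subsubsec:collecting} keeps a node awake a $\Theta(1/p)$ fraction of the $\Theta(d+p)=\Theta(n)$ rounds a single convergecast-broadcast pass needs, i.e.\ $\Theta(n/\log n)$ awake rounds — polynomial, not polylogarithmic. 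Your final ``superround'' fix does not remove this problem, since it still runs period $p=\Theta(\log n)$ inside a $\Theta(n)$-round window.

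The repair is to take the period comparable to the fragment depth, or simply the a priori bound $p=\Theta(n)$, so that each convergecast-broadcast pass costs $O(1)$ awake rounds per node and $O(n)$ time. With $O(\log n)$ Boruvka phases and $O(\log n)$ such passes per phase (finding the minimum outgoing edge, tie-breaking by fragment ID, electing and installing the new root, re-depth-labeling the merged tree, and the cross-fragment handshakes), this yields $O(\log^2 n)$ energy and $O(n\log^2 n)$ time, matching the stated bounds. Your observation that fragment identifiers should only be read at agreed-upon phase boundaries — because a sleeping neighbor silently drops messages — is the right structural insight, and with the period corrected the rest of your outline goes through.
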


\subsubsection{Running Several Subroutines Simultaneously}
\label{subsubsec:running}
In our algorithms, we often have several algorithmic subroutines. A question arises when these subroutines want to use the same edge. We use the following approach: if an edge is going to be used by at most $k$ subroutines at every single round, we divide all rounds into \textbf{megarounds}, each consisting of $k$ rounds. Then, if a node has to be awake in any of these rounds, it will be awake in all of them. Then, in this single megaround, we will be able to deliver all messages that subroutines wanted to deliver in the corresponding round.

\subsection{Sparse Covers}

\subsubsection{Definition}
\label{subsubsec:sparseCovers}
\begin{definition} For any positive integer $d$, we define a \textbf{sparse $d$-cover} with stretch $s$ of an $n$-node graph  to be a set of clusters, such that the following conditions hold:

\begin{itemize}
    \item The diameter of each cluster is at most $d\cdot s$.

    \item Each node is in $O(\log{n})$ clusters.

    \item For each node $v$, there  is a cluster that contains all nodes $u$ for which $dist(u, v)\leq d$.
\end{itemize}
We assume that each node knows which clusters it is in and for each cluster, we have a tree of depth $d\cdot s$ that spans all cluster nodes, where each node knows its parent, children, and depth in the tree.

\end{definition}

There exist sparse covers with stretch $s=O(\log n)$~\cite{Awerbuch1990SparsePartitions, Linial1993LowDecompositions} but the constructions are not deterministic or efficient. We work with deterministic distributed constructions of sparse cover that are time efficient and have slightly higher parameters~\cite{Rozhon2020Polylogarithmic-timeDerandomization}, as abstracted below:

\begin{itemize}
    \item The cluster stretch is $O(\log^3{n})$. More concretely, each cluster has a $O(d\log^3{n})$-depth cluster tree, which spans all cluster nodes.
    \item Each edge is used in only $O(\log^4{n})$ cluster trees.
\end{itemize}

\subsubsection{Forest of Sparse Covers}

\label{subsubsec:forest}

Observe the following: 

\begin{observation}

\label[observation]{observation:clusterparent}

Let $\mathcal{C}_1$ be a sparse $d$-cover, and $\mathcal{C}_2$ be a sparse $2ds$-cover, where $s$ is larger than the stretch of $\mathcal{C}_1$ (so that $s = \Theta(\log^3{n})$). Then, every cluster of $\mathcal{C}_1$ is fully contained in some cluster of $\mathcal{C}_2$, together with its $ds$-neighbourhood.

\end{observation}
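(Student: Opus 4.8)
This is a short triangle-inequality argument using only the covering property of $\mathcal{C}_2$ and the diameter bound of $\mathcal{C}_1$. The plan is as follows. Fix an arbitrary cluster $C_1$ of $\mathcal{C}_1$; if $C_1$ is empty there is nothing to prove, so pick any node $v \in C_1$. Recall that, since $\mathcal{C}_1$ is a sparse $d$-cover whose stretch is at most $s$, every cluster of $\mathcal{C}_1$ has diameter at most $d \cdot (\text{stretch of }\mathcal{C}_1) \le ds$; in particular $dist(x, v) \le ds$ for every $x \in C_1$.

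Now apply the covering property of the sparse $2ds$-cover $\mathcal{C}_2$ to the node $v$: there is a cluster $C_2 \in \mathcal{C}_2$ that contains \emph{all} nodes $u$ with $dist(u, v) \le 2ds$. I claim $C_2$ is the desired cluster, i.e., that it contains $C_1$ together with its $ds$-neighbourhood. Indeed, let $w$ be any node in the $ds$-neighbourhood of $C_1$, meaning there exists $x \in C_1$ with $dist(w, x) \le ds$. By the triangle inequality,
\[
dist(w, v) \le dist(w, x) + dist(x, v) \le ds + ds = 2ds,
\]
so $w \in C_2$ by the choice of $C_2$. Taking $w \in C_1$ itself (with $x = w$, $dist(w,x)=0 \le ds$) shows $C_1 \subseteq C_2$ as well. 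This proves the observation.

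\textbf{Main obstacle.} There is essentially no obstacle here; the only point requiring care is that the diameter bound on clusters of $\mathcal{C}_1$ must be compared against $ds$, which is exactly what the hypothesis ``$s$ is larger than the stretch of $\mathcal{C}_1$'' guarantees, and that the threshold of $\mathcal{C}_2$ is chosen to be $2ds$ precisely so that the sum $ds + ds$ in the triangle inequality is absorbed. (One could also note that the bound on the depth of the spanning tree of $C_2$ promised in the sparse-cover definition does not enter the statement and can be ignored.)
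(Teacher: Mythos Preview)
Your proof is correct and takes essentially the same approach as the paper: fix a node $v$ in the cluster $C$ of $\mathcal{C}_1$, use the covering property of $\mathcal{C}_2$ to obtain a cluster containing the $2ds$-ball around $v$, and then use the diameter bound $\mathrm{diam}(C)\le ds$ together with the triangle inequality to conclude that the $ds$-neighbourhood of $C$ lies in that ball. The paper's version is simply a terser statement of this same argument.
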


\begin{proof}

Consider any cluster $C$ of $\mathcal{C}_1$, and any node $v$ of $C$. All its $2ds$-neighbourhood is contained in some cluster of $\mathcal{C}_2$. As $diam(C)\leq ds$, $C$ and its $ds$ neighborhood are contained in this cluster.
\end{proof}

This allows us to build a tree-like structure on clusters of sparse covers of different layers. Choose $B = \Theta(\log^3{n})$ so that $\frac{B}{2}$ is an upper bound on the stretch of sparse covers.

\begin{definition}

For any $\mathcal{D}$, we define a \textbf{layered sparse $\mathcal{D}$-cover} to be a collection of sparse $B^{j}$-covers for all $j\in \{0,1, \dots, \lceil{\log_B 2\mathcal{D}\rceil}\}$, with the following constraint:

\begin{itemize}
\item
For any $j < \lceil{\log_B 2\mathcal{D}\rceil}$, every cluster $C$ of sparse $B^j$-cover is assigned a cluster of sparse $B^{j+1}$-cover that completely contains it and its $\frac{B^{j+1}}{2}$-neighborhood (nodes at distance at most $\frac{B^{j+1}}{2}$ from $C$) (every node of $C$ has to know the identifier of this cluster). Such cluster is called a \textbf{parent cluster} of $C$, and is denoted by $parent(C)$.
\end{itemize}

\end{definition}

\subsection{$\mathcal{D}$-Thresholded BFS given a Layered Sparse $\mathcal{D}$-Cover}

\label{subsec:BFSwithcovers}

Suppose we are already given a layered sparse $\mathcal{D}$-cover of the graph. We will see how to construct this in \cref{subsec:energybfs}. Let $L = \lceil{\log_B \mathcal{D}\rceil}$, and $S$ be the set of sources of BFS.

\textbf{Idea of the algorithm.} We will have two processes in parallel: BFS, and collecting information in the cluster trees of sparse covers, as described in \cref{subsubsec:collecting}. For a node $v$, we want to ensure that it is active when BFS actually reaches it. For that, we want the frequency with which $v$ is awake to increase as BFS comes closer and closer to $v$. We will achieve this by having smaller clusters operate at higher frequencies: clusters of sparse $B^i$-covers will operate with period $B^i$, and will be activated only when BFS gets ``pretty close" to them. More precisely, a cluster $C$ will get activated when $parent(C)$ detects that the BFS got pretty close, and tells $C$ to activate. Now, let us formalize this idea by providing the exact details of the algorithm. 



First, we will introduce the notion of \textbf{relevance}.

\begin{definition}
    A cluster of the $B^L$-cover is called \textbf{relevant} if it contains nodes from $S$. Otherwise, let us call it \textbf{irrelevant}. Let us recursively extend the notion of relevancy as follows: a cluster $C$ of $B^j$-cover with $j<L$ is relevant if and only if $parent(C)$ is relevant.
\end{definition}

\begin{lemma}
    For any $j<L$, every cluster of the $B^j$-cover which contains a node $v$ with $dist(S, v) \le \mathcal{D}$ is relevant.
\end{lemma}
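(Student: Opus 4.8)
The statement to prove is: for any $j < L$, every cluster of the $B^j$-cover containing a node $v$ with $dist(S,v) \le \mathcal{D}$ is relevant. The natural approach is induction on $L - j$, i.e., downward induction from the top layer. Actually, let me think about whether induction is even needed, or whether I can argue directly by chasing the parent chain up to layer $L$.

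Let me set up the direct approach first. Take a cluster $C$ of the $B^j$-cover containing a node $v$ with $dist(S,v) \le \mathcal{D}$. I want to show $C$ is relevant, which by definition means: following the parent chain $C = C_j, C_{j+1} = parent(C_j), \dots, C_L$ up to a cluster $C_L$ of the $B^L$-cover, that top cluster $C_L$ must contain a source node (i.e., be relevant in the base sense).

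So the key claim to establish is that $C_L$ contains a node of $S$. The plan is to track where $v$ sits relative to these clusters. By the parent-cluster property, $C_{i+1}$ contains $C_i$ together with its $\frac{B^{i+1}}{2}$-neighborhood. In particular $v \in C_i$ for all $i$ from $j$ up to $L$ — once $v$ is in $C_j$, and $C_j \subseteq C_{j+1}$, etc. But I need more: I need to know that $C_L$ actually reaches back to a source. For this I should use the covering property of the $B^L$-cover itself: for the node $v$, there is a cluster of the $B^L$-cover containing all nodes $u$ with $dist(u,v) \le B^L$. Since $L = \lceil \log_B \mathcal{D} \rceil$, we have $B^L \ge \mathcal{D} \ge dist(S,v)$, so the nearest source $s \in S$ satisfies $dist(s,v) \le \mathcal{D} \le B^L$, hence $s$ lies in that covering cluster — call it $C'$. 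So $C'$ is relevant. The subtlety is that $C'$ need not literally be $C_L = parent^{L-j}(C)$; there could be several $B^L$-clusters containing $v$. I need to argue that $C_L$ is also relevant, not just $C'$.

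Here is where I expect the main obstacle, and I think the resolution is: I don't actually need $C_L = C'$. I claim $C_L$ itself must contain a source. Let me reconsider — the parent cluster of $C_{L-1}$ is a $B^L$-cluster containing $C_{L-1}$ together with its $\frac{B^L}{2}$-neighborhood. Now $v \in C_{L-1}$ (by the chain of containments), so $C_L$ contains all of $v$'s $\frac{B^L}{2}$-neighborhood. Since $dist(s,v) \le \mathcal{D}$, I'd want $\mathcal{D} \le \frac{B^L}{2}$. With $L = \lceil \log_B \mathcal{D}\rceil$ this gives $B^L \ge \mathcal{D}$ but not necessarily $B^L \ge 2\mathcal{D}$ — so this is tight and I may need to be careful with the exact definition of $L$. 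Looking back, the layered sparse $\mathcal{D}$-cover is defined with layers up to $\lceil \log_B 2\mathcal{D}\rceil$, while here $L = \lceil \log_B \mathcal{D}\rceil$; so in fact there is at least one more layer above $L$ available, or alternatively $B^L \ge \mathcal{D}$ suffices if I instead invoke the covering property of the $B^L$-cover at node $v$ directly rather than going through the neighborhood of the parent. The cleanest fix: note $v \in C_{L-1}$, so $C_L = parent(C_{L-1})$ contains $v$'s entire $\frac{B^L}{2}$-neighborhood; and separately, relevance of a $B^L$-cluster should really be argued via: any $B^L$-cluster containing $v$'s full $\mathcal{D}$-ball contains a source. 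So I will do the downward induction instead, which sidesteps picking the "right" top cluster:

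\textbf{Induction on $L - j$.} Base case $j = L-1$: wait, the statement is only claimed for $j < L$, and relevance of layer-$L$ clusters is the definitional base. So I induct down from $j = L-1$. For $j = L-1$: let $C$ be a $B^{L-1}$-cluster containing $v$ with $dist(S,v) \le \mathcal{D}$. Then $parent(C)$ is a $B^L$-cluster containing $C$ and its $\frac{B^L}{2}$-neighborhood; since $v \in C$ and $dist(s,v) \le \mathcal{D} \le \frac{B^L}{2}$ (using $B^L \ge 2\mathcal{D}$, which holds if $L \ge \log_B 2\mathcal{D}$; if the paper's $L = \lceil\log_B\mathcal{D}\rceil$ doesn't give this, I'd appeal to the covering property of the $B^{L-1}$-cover applied at $v$, which needs $B^{L-1} \ge \mathcal{D}$ — but that fails, so $B^L \ge 2\mathcal{D}$ via the availability of a genuine layer $L$ is what's needed; I will flag this and use whichever bound the surrounding text actually guarantees), $parent(C)$ contains $s \in S$, so $parent(C)$ is relevant, hence $C$ is relevant. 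Inductive step: given the claim for $j+1$, let $C$ be a $B^j$-cluster containing $v$ with $dist(S,v)\le\mathcal{D}$. Then $parent(C)$ is a $B^{j+1}$-cluster containing $C$, hence containing $v$; by the induction hypothesis $parent(C)$ is relevant; therefore $C$ is relevant by definition. This closes the induction. The one quantitative point to nail down — and the only real content beyond bookkeeping — is the inequality $\mathcal{D} \le \frac{B^L}{2}$ (or its substitute), which should follow directly from how $L$ and the top layer of the layered $\mathcal{D}$-cover were chosen; I would make this explicit by citing the definition of the layered sparse $\mathcal{D}$-cover in \cref{subsubsec:forest}.
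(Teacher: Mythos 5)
Your proof is correct and uses the same idea as the paper: the paper goes directly to the $(L-1-j)$-th ancestor $C$ of the given cluster, notes $v\in C$, and observes that $parent(C)$ (a $B^L$-cluster) contains the $\frac{B^L}{2}$-neighborhood of $C$ and hence the nearest source since $\frac{B^L}{2}\ge\mathcal{D}$ — this is exactly your base case, and your downward induction merely unfolds the ancestor chain that the paper compresses into a single sentence. Your flag about the inequality $\frac{B^L}{2}\ge\mathcal{D}$ is a fair catch: with $L=\lceil\log_B\mathcal{D}\rceil$ it can fail (e.g.\ $\mathcal{D}$ just below $B^L$), and the intended reading, consistent with the definition of a layered sparse $\mathcal{D}$-cover whose top layer index is $\lceil\log_B 2\mathcal{D}\rceil$, is that the argument should invoke that top layer, for which $B^{\text{top}}\ge 2\mathcal{D}$ and the inequality holds.
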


\begin{proof}
    Consider the $L-1-j$-th ancestor of such a cluster, denote it by $C$. $C$ contains some node $v$ with $dist(S, v) \le \mathcal{D}$. As $parent(C)$ contains $\frac{B^L}{2}$-neighborhood of $C$, and $\frac{B^L}{2} \ge \mathcal{D}$, $parent(C)$ contains at least one node from $S$, and therefore is relevant.
\end{proof}

The idea is that $\mathcal{D}$-thresholded BFS will never get to irrelevant clusters, so irrelevant clusters should not be involved in this activating/deactivating process. Now, we describe how clusters are activated, deactivated, initialized, and how do they learn whether they are relevant.

\textbf{Activating clusters.} When some nodes of cluster $C$ of sparse $B^j$-cover are active, they will collect the following information with the convergecast and broadcast: 

\begin{center}
\textit{"Are any of nodes of $C$ reached by BFS?"} 

\end{center}

If all nodes of $C$ are active, then in at most $O(B^j\log^3{n})$ rounds after some node of $C$ becomes reached by BFS, all its nodes will learn that the cluster has been reached by BFS. When a node of $C$ learns this, it activates itself in all children clusters of $C$ in which it is present.

\textbf{Deactivating clusters.} A cluster will deactivate after it has been reached by BFS and all its children clusters have been activated (which is learned by the same convergecast and broadcast process). Additionally, for clusters of $B^0$-cover we require that all nodes of a cluster are reached by BFS before it deactivates.

\textbf{Initialization.} We start by making one cycle of convergecast-broadcast for every single cluster, which takes $O(B^L\log^3{n})$ rounds. By the end of that cycle, for every cluster $C$, all its nodes will know whether $C$ contains any node from $S$. Clusters of $B^L$-cover will learn whether they are relevant. Only relevant clusters of $B^L$-cover will be active and tell their children to activate; irrelevant clusters are never involved. Nodes will also learn, for each cluster $C$ in which they are contained, whether they should be active for $C$ (by checking whether $parent(C)$ contains any nodes from $S$).

After this initialization period of $O(B^L \log^3{n})$ rounds, all nodes (that have to) start doing convergecast-broadcast, and, simultaneously, BFS starts. One iteration of BFS will happen every $\Theta(\log^3{n})$ stages (with constant determined from \cref{lemma:induction}).

\textbf{Correctness argument.}






    

\begin{lemma}

    \label[lemma]{lemma:induction}

    At any point in time after the initialization period the following statement holds: 

    \begin{itemize}
        \item For any $0 \le j \le L$ and any relevant cluster $C$ of the sparse $B^j$-cover, all its nodes are active for $C$ before BFS reaches any of its nodes.
    \end{itemize}
    
\end{lemma}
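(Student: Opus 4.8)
The plan is to prove \cref{lemma:induction} by \textbf{downward induction on the layer index $j$}, from $j=L$ down to $j=0$. Although the lemma is phrased as an ``at any point in time'' invariant, it is really a structural statement about the whole run: for each relevant cluster $C$ of the $B^j$-cover I would exhibit a round $t(C)$, strictly before the round in which BFS first reaches a node of $C$, by which every node of $C$ has become active for $C$; by the deactivation rule such a node stays active for $C$ at least until BFS reaches $C$, so this is what the lemma asks for. Because the argument for layer $j$ only uses the statement for layer $j+1$ (applied to $parent(C)$), no separate induction on time is needed.

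For the base case $j=L$, the relevant clusters of the $B^L$-cover are exactly those containing a node of $S$, and the initialization step makes every node of such a cluster learn this and become active for it within the $O(B^L\log^3 n)$ initialization rounds --- which is before BFS starts. For the inductive step, fix a relevant cluster $C$ of the $B^j$-cover with $j<L$ and put $P:=parent(C)$, a relevant cluster of the $B^{j+1}$-cover with $C\subseteq P$ and with $P$ containing the $\tfrac{B^{j+1}}{2}$-neighborhood of $C$. If BFS never reaches $C$ there is nothing to prove, so assume $dist(S,C)\le\mathcal{D}$. The easy subcase is $dist(S,C)\le\tfrac{B^{j+1}}{2}$: then some source of $S$ lies within distance $\tfrac{B^{j+1}}{2}$ of $C$, hence in $P$, so the initialization rule ``be active for $C$ iff $parent(C)$ contains a source'' already makes every node of $C$ active for $C$ by the end of initialization.

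The substantive subcase is $dist(S,C)>\tfrac{B^{j+1}}{2}$. Here I would take $u^*\in C$ with $dist(S,u^*)=dist(S,C)$, fix a shortest $S$--$u^*$ path, and let $w$ be its vertex at distance $\lfloor\tfrac{B^{j+1}}{2}\rfloor$ from $u^*$; then $dist(w,C)\le\tfrac{B^{j+1}}{2}$ so $w\in P$, and $dist(S,w)=dist(S,C)-\lfloor\tfrac{B^{j+1}}{2}\rfloor$, so BFS reaches a node of $P$ at least about $\tfrac{B^{j+1}}{2}$ BFS-iterations before it reaches any node of $C$. By the induction hypothesis applied to $P$, all nodes of $P$ are active for $P$ before BFS reaches $P$; hence, once BFS touches $P$, the convergecast/broadcast mechanism of \cref{subsubsec:collecting} running inside $P$ (period $B^{j+1}$, tree depth $O(B^{j+1}\log^3 n)$) delivers the bit ``$P$ has been reached by BFS'' to every node of $P$ within $O(B^{j+1}\log^3 n)$ rounds, at which point every such node that also lies in $C$ activates itself for $C$. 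Thus all nodes of $C$ become active for $C$ within $O(B^{j+1}\log^3 n)$ rounds of BFS reaching $P$, and since one BFS iteration takes $\Theta(\log^3 n)$ rounds, the head start before BFS reaches $C$ is $\Omega(B^{j+1}\log^3 n)$ rounds; choosing the constant hidden in ``$\Theta(\log^3 n)$ rounds per BFS iteration'' large enough makes the head start exceed the delay, completing the induction.

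I expect this last timing reconciliation to be the \textbf{main obstacle}: both the within-$parent(C)$ propagation time and the head start coming from the $\tfrac{B^{j+1}}{2}$-neighborhood containment are of order $B^{j+1}\cdot\polylog(n)$ rounds, so the argument only closes after fixing the precise interplay between the sparse-cover stretch ($O(\log^3 n)$), the per-layer period ($B^{j+1}$), and the BFS slowdown factor --- this is exactly the constant that the algorithm's description defers to this lemma --- and after carefully handling the off-by-one effects of the shortest-path truncation (which is why it is cleanest to peel off the $dist(S,C)\le\tfrac{B^{j+1}}{2}$ subcase, where the truncation point would no longer exist). A smaller point to state explicitly is that the deactivation rule guarantees a node that has activated for $C$ is still active for $C$ on the round BFS reaches $C$.
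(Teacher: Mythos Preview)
Your proposal is correct and follows essentially the same approach as the paper: downward induction on $j$, handling the case where $parent(C)$ contains a source via initialization, and otherwise comparing the $O(B^{j+1}\log^3 n)$ propagation delay inside $parent(C)$ against the $\tfrac{B^{j+1}}{2}$-iteration head start coming from the $\tfrac{B^{j+1}}{2}$-neighborhood containment, with the BFS slowdown constant chosen to make the comparison go through. Your version is in fact more careful than the paper's (you explicitly construct the intermediate vertex $w$ on a shortest path, split off the small-distance subcase cleanly, and note the deactivation-rule point), but the structure and key idea are the same.
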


\begin{proof}
    We can prove this by induction by $j$, from larger $j$ to smaller. For clusters, $C$, such that $C$ or $parent(C)$ contain nodes from $S$, this is ensured in the initialization stage. As all clusters of sparse $B^L$-covers contain simply all nodes the statement is true for $j = L$. Suppose that it is true for $j+1$, Let us prove it for $j$. Consider any cluster $C$ of sparse $B^j$-cover, such that $parent(C)$ does not contain vertices from $S$. Consider the moment when BFS first reaches $parent(C)$. 
    
    On the one hand, after at most $O(B^{j+1}\log^3{n})$ rounds, all nodes of $parent(C)$ will learn that $parent(C)$ has been reached, so all nodes of $C$ will become active for $C$. On the other hand, it will take at least $\frac{B^{j+1}}{2}$ steps for BFS to reach $C$ after that, with one step happening every $\Theta(\log^3{n})$ rounds, so it won't reach $C$ before all its nodes become active for $C$ (where we choose an appropriate constant for this $\Theta(\log^3{n})$, so that this BFS is indeed slow enough).
\end{proof}

\begin{theorem}
    \label[theorem]{theorem:BFSwithcovers}
    Given a layered sparse $\mathcal{D}$-cover, there exists an algorithm that correctly computes BFS in time $O(\mathcal{D}\log^{11}{n})$ and energy cost $O(\log^{18}{n})$.
\end{theorem}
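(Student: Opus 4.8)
The plan is to combine the correctness invariant of \Cref{lemma:induction} with a careful accounting of time and energy. First I would establish \emph{correctness}. By \Cref{lemma:induction}, for every relevant cluster $C$ at every level, all nodes of $C$ are active for $C$ before BFS reaches any of them; since any node $v$ with $dist(S,v)\le\mathcal{D}$ lies in a relevant $B^0$-cluster (by the lemma just before \Cref{lemma:induction}), $v$ is awake in that $B^0$-cluster's convergecast/broadcast schedule when the BFS wavefront arrives at it. Because the $B^0$-cluster containing $v$'s $1$-neighborhood also stays active until all its nodes are reached (the extra deactivation condition for $B^0$-covers), $v$ actually receives the ``BFS reached you'' signal from a neighbor at distance $dist(S,v)$, learns its distance, and the wavefront propagates correctly. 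I should also note that the BFS is run one hop per $\Theta(\log^3 n)$ stages with the constant chosen exactly so the induction in \Cref{lemma:induction} goes through, i.e. the wavefront is slow enough that a parent cluster always finishes alerting its children before the wavefront crosses the $\tfrac{B^{j+1}}{2}$-neighborhood gap. Nodes at distance $>\mathcal{D}$ simply never get reached within the runtime and output $\infty$; this needs that the total runtime genuinely exceeds $\mathcal{D}$ hops, which follows from the time bound below.

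Next, \emph{time complexity}. The initialization phase is one full convergecast/broadcast cycle over all clusters, and the deepest cluster tree has depth $O(\mathcal{D}\log^3 n) = O(B^L\log^3 n)$, so initialization costs $O(\mathcal{D}\log^3 n)$ rounds (here I use $B=\Theta(\log^3 n)$ and $B^L=\Theta(\mathcal{D})$). After that, the BFS needs at most $\mathcal{D}$ hops (since we only care about distances $\le\mathcal{D}$), each hop taking $\Theta(\log^3 n)$ stages, and each stage must accommodate, via the megarounds device of \Cref{subsubsec:running}, all the convergecast/broadcast traffic of all $L+1 = O(\log_B \mathcal{D}) = O(\log n/\log\log n)$ cover layers active at a node plus the BFS message. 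Since an edge is used in $O(\log^4 n)$ cluster trees per layer and there are $O(\log n)$ layers, a megaround is $O(\log^5 n)$ rounds, giving $O(\mathcal{D}\cdot\log^3 n\cdot\log^5 n) = O(\mathcal{D}\log^8 n)$ rounds for the BFS phase; adding slack for the convergecast latency $O(B^j\log^3 n)$ at each level (which is dominated, as it is charged against the $\tfrac{B^{j+1}}{2}$ hops of BFS progress) keeps the total at $O(\mathcal{D}\log^{11}n)$ after absorbing the stated constants and the stretch factors $\log^3 n$ that appear in the cluster-tree depths.

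Finally, \emph{energy}. The key point is that a node $v$ is awake, for a given cluster $C$ of a $B^j$-cover containing it, only during the window between $C$'s activation and its deactivation, and during that window $v$ wakes only in a $\Theta(1/B^j)$ fraction of rounds (the period-$B^j$ schedule of \Cref{subsubsec:collecting}). By the invariant, $C$ is activated only after BFS reaches $parent(C)$ and deactivated once BFS has reached $C$ and alerted all of $C$'s children; the duration of this window is $O(B^{j+1}\log^3 n)$ rounds (the time for BFS to cross from $parent(C)$ into $C$ plus one convergecast cycle), so $v$ spends $O(B^{j+1}\log^3 n)\cdot\Theta(1/B^j) = O(B\log^3 n) = O(\log^6 n)$ awake rounds per cluster it belongs to at level $j$. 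Multiplying by the $O(\log n)$ clusters per layer, the $O(\log n)$ layers, the $O(\log^4 n)$ edge-congestion that forces megarounds, and the per-stage megaround factor, and including the $O(\mathcal{D}\log^3 n)$ initialization where each node is awake a $\Theta(1/B^L)$ fraction (i.e. $O(\log^3 n)$ rounds), the per-node energy telescopes to $O(\log^{18} n)$.

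I expect the main obstacle to be the energy bookkeeping across layers: one must check that the activation windows at different levels, although nested, do not overlap in a way that multiplies rather than adds, i.e. that when $v$ is awake frequently for a small cluster it is simultaneously awake only rarely (or the cluster is deactivated) for the larger clusters containing it. This is exactly what the relevance mechanism and the ``activate children, then deactivate'' rule buy us, and making the constants in the $\Theta(\log^3 n)$ BFS-slowdown consistent with both \Cref{lemma:induction} and this non-overlap claim is the delicate part; the time bound, by contrast, is a routine multiplication of the stretch, layer-count, and megaround factors.
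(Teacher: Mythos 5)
Your high-level plan is the same as the paper's: derive correctness from the $j=0$ case of \Cref{lemma:induction}, bound time by initialization plus (slowed) BFS with a megaround factor, and bound energy by a per-cluster active window times the wake fraction $1/B^j$ times the number of clusters and the megaround factor. The structure is right, but the quantitative bookkeeping has two genuine errors that you paper over by ``absorbing'' factors, and one of them is not merely cosmetic.

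First, your cluster activity window is short by a factor of $\Theta(\log^3 n)$. You claim a node is active for a $B^j$-cluster $C$ during a window of $O(B^{j+1}\log^3 n)$ rounds, reasoning that this is ``the time for BFS to cross from $parent(C)$ into $C$.'' But $C$ does not deactivate when BFS first arrives at it; it deactivates only after BFS has swept through enough of the region to activate all of $C$'s children. The relevant horizon is the diameter of $parent(C)$, which is $O(B^{j+1}\log^3 n)$ \emph{hops}, and each hop takes $\Theta(\log^3 n)$ rounds (the deliberate slowdown needed for \Cref{lemma:induction}). So the window is $O(B^{j+1}\log^6 n)$ rounds, giving $O(B\log^6 n)=O(\log^9 n)$ awake rounds per cluster, not the $O(\log^6 n)$ you wrote. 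Your subsequent multiplication then happens to land near $O(\log^{18} n)$ only because it also double-counts: you multiply separately by ``$O(\log^4 n)$ edge-congestion that forces megarounds'' and by ``the per-stage megaround factor,'' but these are the same $O(\log^5 n)$ object. The paper's cleaner accounting is $O(\log^9 n)$ per cluster $\times$ $O(\log^4 n)$ clusters containing the node (across all layers) $\times$ $O(\log^5 n)$ megaround blowup $=O(\log^{18} n)$.

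Second, in the time bound you set $B^L=\Theta(\mathcal{D})$, but $L=\lceil\log_B 2\mathcal{D}\rceil$ only guarantees $B^L=O(\mathcal{D}\cdot B)=O(\mathcal{D}\log^3 n)$, so the initialization cycle is $O(B^L\log^3 n)=O(\mathcal{D}\log^6 n)$ rounds (pre-megaround), not $O(\mathcal{D}\log^3 n)$; this is what actually dominates, and with the $O(\log^5 n)$ megaround factor it yields $O(\mathcal{D}\log^{11} n)$ directly, rather than via the vague ``absorbing stretch factors'' step. Neither error changes the final exponents once fixed, but the proof as written does not arithmetically support the claimed bounds.
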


\begin{proof}
The case $j = 0$ of \cref{lemma:induction} implies that every node will be awake when it's reached by BFS, so the BFS will work properly. We next bound the time and energy complexities.

Let us start with time. After $O(B^L\log^3{n}) = O(\mathcal{D}\log^6{n})$ rounds of initialization, one iteration of BFS happens every $\Theta(\log^3{n})$ rounds, and there are $O(\mathcal{D})$ of them, so there are $O(\mathcal{D}\log^6{n})$ rounds in total. But here we might have to deliver several messages through an edge at a time, hence the scheduling question arises. Remember that every edge is used in only $O(\log^4{n})$ cluster trees per sparse cover, so every edge is used in only $O(\log^5{n})$ cluster trees in total. So, in terminology from \cref{subsubsec:running}, we need megarounds, each of which contains $O(\log^5{n})$ rounds. Hence, the resulting time complexity becomes $O(\mathcal{D}\log^{11}{n})$.

Now, let's consider energy cost. Consider any cluster $C$ of a sparse $B^j$-cover. It becomes active only after some node in $parent(C)$ is reached. As diameter of $parent(C)$ is at most $O(B^{j+1}\log^3{n})$, in at most $O(B^{j+1}\cdot \log^3{n})$ BFS iterations after that, all nodes of $parent(C)$, and, therefore, of $C$ too, will be done with BFS, so $C$ will deactivate. As there is one iteration of BFS per $\Theta(\log^3{n})$ rounds, a node can be active for $C$ for at most $O(B^{j+1}\log^6{n})$ rounds. As it is awake with period $B^j$, it will be awake for $C$ for at most $O(B\log^6{n}) = O(\log^9{n})$ rounds. 

Note, however, that a node is in $O(\log^4{n})$ clusters in total, and that it has to remain awake for the entirety of every single megaround, which takes $O(\log^5{n})$ regular rounds. So, every node is awake for at most $O(\log^{18}{n})$ rounds.
\end{proof}

\subsection{Synchronous Deterministic Construction of $d$-Cover} 

\label{subsec:deterministiccover}

So far, we have assumed that we are given sparse covers. In this subsection, we review the synchronous construction of Rozhon and Ghaffari~\cite{Rozhon2020Polylogarithmic-timeDerandomization}. In the next subsection, we explain how to adapt this construction to the \energy model.

We start by introducing a deterministic algorithm for constructing the $d$-cover in $O(d\cdot \polylog(n))$ rounds in a synchronous environment. For that, we will a network decomposition. 

\begin{definition}
    ($k$-separated Weak Diameter Network Decomposition) Given a graph $G=(V, E)$, we define a $(\mathcal{C}, \mathcal{D})$ $k$-separated weak diameter network decomposition to be a partition of $G$ into vertex-disjoint graphs $G_1$, $G_2$, \dots, $G_\mathcal{C}$ such that for each $i\in \{1, 2, \dots, \mathcal{C}\}$, we have the following property: the graph $G_i$ is made of a number of vertex-disjoint clusters $X_1$, $X_2$, \dots, $X_\ell$, so that:

    \begin{itemize}
        \item For any $X_a$ and any two vertices $v, u \in X_a$, $dist(u, v) \le D$ in graph $G$.
        
        \item For any two $X_a, X_b$ with $a\neq b$ and any $u \in X_a$, $v \in X_b$ holds $dist(u, v) > k$.
    \end{itemize}
\end{definition}

\subsubsection{Construction of $k$-Separated Network Decomposition in the \congest Model}
We first recall the notion of Steiner trees: A Steiner tree of a cluster is a tree with nodes labeled as \textit{terminal} and \textit{nonterminal}; the aim is to connect all terminals, possibly using some nonterminals.

\begin{restatable}{theorem}{MAINALGGK}
\label[theorem]{theorem:main_alg_gk}[Rozhon and Ghaffari\cite{Rozhon2020Polylogarithmic-timeDerandomization}]
There is an algorithm that, given a value $k$ known to all nodes, in $O(k \log^{10} n )$ communication rounds outputs a $k$-separated weak-diameter network decomposition of $G$ with $O(\log n)$ color classes, each one with $O(k \cdot \log^3 n)$ weak-diameter in $G$. 


Moreover, for each color and each cluster $\mathcal{C}$ of vertices with this color, we have a Steiner tree $T_\mathcal{C}$ with radius $O(k \cdot \log^3 n)$ in $G$, for which the set of terminal nodes is equal to $\mathcal{C}$. 
Furthermore, each edge in $G$ is in $O(\log^4{n})$ of these Steiner trees.
\end{restatable}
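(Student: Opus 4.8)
The plan is to obtain this from the deterministic network-decomposition construction of Rozhon and Ghaffari~\cite{Rozhon2020Polylogarithmic-timeDerandomization}, which already outputs a weak-diameter decomposition together with the Steiner trees and a per-edge congestion bound; the only thing to add is the separation parameter $k$. The clean way to bring in $k$ is to run the base algorithm ``at scale $k$'': wherever the base construction would grow a cluster by one hop, it instead grows by a ball of radius $k$ in $G$ (equivalently, one runs it on the power-type graph $G^{\le k}$ in which two nodes are adjacent iff they are within $G$-distance $k$, simulating each of its communication rounds by $O(k)$ rounds of flooding in $G$). Two clusters placed in the same color class are then non-adjacent in $G^{\le k}$, hence at $G$-distance $>k$, which is exactly $k$-separation; a cluster that had weak diameter $O(\log^3 n)$ in the scaled execution has weak diameter $O(k\log^3 n)$ in $G$; and a Steiner tree maintained during the scaled execution---kept directly as a $G$-object, so that each scaled growth step simply appends a radius-$k$ BFS tree---ends up with radius $O(k\log^3 n)$ in $G$. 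The base algorithm runs in $O(\polylog n)$ rounds, so the scaled version runs in $O(k\polylog n)=O(k\log^{10}n)$ rounds with the exponents tracked as in~\cite{Rozhon2020Polylogarithmic-timeDerandomization}.

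It remains to recall why the base construction has the stated $k=O(1)$ parameters, which I would simply import from~\cite{Rozhon2020Polylogarithmic-timeDerandomization}. It produces $\mathcal{C}=O(\log n)$ color classes and handles them one at a time on the still-uncolored subgraph. For a fixed color it starts with each uncolored node as a singleton cluster carrying a distinct $O(\log n)$-bit identifier, and runs $O(\log n)$ \emph{bit-steps}; in bit-step $i$ the clusters are split by their $i$-th identifier bit into ``blue'' ($0$) and ``red'' ($1$), each blue cluster tries to absorb the red boundary nodes adjacent to it, and every such red node either joins a blue cluster or has its few edges to blue clusters severed---the choice being made by a deterministic rounding (roughly, a method-of-conditional-expectations step on a sum of local potential terms) designed so that globally only a small fraction of clusters get disconnected or lose too much of their boundary. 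The invariants driving the analysis are that a surviving cluster's weak diameter grows by only a polylogarithmic amount over the $O(\log n)$ bit-steps of one color, totalling $O(\log^3 n)$, and that a constant fraction of the still-uncolored nodes end up in ``good'' clusters that are then colored and retired together with their accumulated Steiner trees, so $O(\log n)$ colors suffice. The same accounting shows each edge lies in $O(\polylog n)$ Steiner trees within one color, hence $O(\log^4 n)$ over all colors.

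The genuinely new point to check, beyond citing~\cite{Rozhon2020Polylogarithmic-timeDerandomization}, is that running ``at scale $k$'' does not spoil the congestion bound---i.e., expanding every hop into a radius-$k$ ball keeps each $G$-edge on only $O(\log^4 n)$ Steiner trees rather than incurring a $k$-fold blow-up. I expect this to be the main thing to argue carefully, and the resolution is that the trees should be grown and stored as $G$-objects throughout (never as abstract $G^{\le k}$-edges that are pulled back to arbitrary $G$-paths after the fact): a $G$-edge is then used by the tree of a cluster $C$ only when it lies on one of the radius-$k$ BFS explorations that $C$ actually performed, and the argument of~\cite{Rozhon2020Polylogarithmic-timeDerandomization} that bounds how many cluster trees an edge can belong to is insensitive to this rescaling. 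Once that is in place, the remaining pieces---the time bound after the $O(k)$ simulation overhead, the $O(\log n)$ colors, the $O(k\log^3 n)$ weak diameter, and the $O(k\log^3 n)$ Steiner-tree radii---assemble directly into the stated theorem.
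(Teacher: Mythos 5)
Your proposal is essentially the approach the paper takes: restate the Rozhon--Ghaffari construction, observing that it is already run at the separation scale $k$ (BFS steps to depth $k$, Steiner trees grown as $G$-objects), and import the invariants from their paper; there is no separate $k{=}1$ ``base'' that gets rescaled, but this is only a difference in framing, not substance. One inaccuracy worth flagging in your description of the per-bit-phase mechanism: it is not a single method-of-conditional-expectations rounding that severs edges; rather, each bit-phase consists of $R=O(\log^2 n)$ repeated grow-or-kill steps in which every blue cluster does a radius-$k$ BFS, counts the red nodes proposing to join, and either absorbs all of them and grows its Steiner tree by one hop (if more than $|A|/2b$ propose) or kills all of them and stops for the rest of the phase---it is precisely this $O(\log^2 n)$-step repetition, times $O(\log n)$ bit-phases, that produces the $O(k\log^3 n)$ weak-diameter bound.
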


\fullOnly{To remain self-contained, we provide a (simplified) description of the algorithm of \cite{Rozhon2020Polylogarithmic-timeDerandomization} in the \cref{app:main_alg_gk}.}\shortOnly{For accessibility, we provide a (simplified) description of the algorithm of \cite{Rozhon2020Polylogarithmic-timeDerandomization} in the full version of this paper.} Our low-energy construction will refer to the language used in this description. For proof of the correctness of their algorithm, we refer to \cite{Rozhon2020Polylogarithmic-timeDerandomization}.

\subsubsection{Construction of $d$-Cover in the \congest Model}

\begin{theorem}
    \label[theorem]{theorem:synchcover}
     There is an algorithm that, given a value $d$ that is known to all nodes, in $O(d \log^{10}{n})$ communication rounds outputs a sparse $d$-cover of this graph, together with the Steiner trees of its clusters, such that each edge appears in only $O(\log^4{n})$ cluster trees.
\end{theorem}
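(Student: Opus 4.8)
The plan is to obtain the sparse $d$-cover from the $k$-separated weak-diameter network decomposition of \cref{theorem:main_alg_gk} via the classical ``ball-expansion'' reduction, instantiated with separation parameter $k = 2d$. First I would invoke \cref{theorem:main_alg_gk} with $k = 2d$; this costs $O(d\log^{10}n)$ rounds and yields $O(\log n)$ color classes, where within each color class the clusters $X_1,\dots,X_\ell$ are pairwise $2d$-separated, each has weak diameter $O(d\log^3 n)$ in $G$, and each comes with a Steiner tree $T_{X_i}$ of radius $O(d\log^3 n)$ whose terminal set is exactly $X_i$, with every edge of $G$ lying in only $O(\log^4 n)$ of these Steiner trees. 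The clusters of the sparse $d$-cover will be the \emph{expansions} $X_i^{+d} := \{\, u \in V : dist(u, X_i) \le d \,\}$, one per cluster $X_i$ over all color classes; cluster identifiers can be inherited from the $X_i$ (e.g.\ the identifier of the Steiner-tree root).

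Next I would check the three defining properties. For the diameter bound: a path from $u \in X_i^{+d}$ into $X_i$, across $X_i$ (weak diameter $O(d\log^3 n)$), and back out to $u' \in X_i^{+d}$ has length $O(d\log^3 n) + 2d = O(d\log^3 n)$, so the stretch is $O(\log^3 n)$, as required. For the overlap bound: if a node $w$ were within distance $d$ of two same-color clusters $X_i, X_j$, then $dist(X_i, X_j) \le 2d$, contradicting $2d$-separation; hence $w$ lies in at most one expanded cluster per color class and thus in $O(\log n)$ clusters total. For ball-containment: since the decomposition is a partition, each $v$ lies in exactly one cluster $X_i$, and every $u$ with $dist(u,v)\le d$ has $dist(u,X_i)\le d$, i.e.\ $u\in X_i^{+d}$; so the entire $d$-ball around $v$ sits inside the single cluster $X_i^{+d}$.

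Then I would build the cluster trees and bound congestion. For each $X_i$, run a multi-source BFS from all nodes of $X_i$ for exactly $d$ steps; this explores precisely $X_i^{+d}$ and yields a BFS forest whose roots are the nodes of $X_i$. Attaching this forest to $T_{X_i}$ (each BFS root is a terminal of $T_{X_i}$, hence already present) gives one tree spanning all of $X_i^{+d}$ of depth $O(d\log^3 n) + d = O(d\log^3 n)$; rooting it at the root of $T_{X_i}$, every node computes parent, children, and depth from its $T_{X_i}$-data offset by its BFS depth. For congestion: an edge used by one of the added BFS trees has both endpoints at $G$-distance $\le d$ from the corresponding cluster, and by the overlap argument the $d$-neighborhoods of same-color clusters are vertex-disjoint, so each edge lies in at most one added BFS tree per color class, i.e.\ $O(\log n)$ over all classes; combined with the $O(\log^4 n)$ bound on the original Steiner trees, each edge is in $O(\log^4 n)$ cluster trees total. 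Since the BFS expansions within one color class run with no congestion (disjoint regions), processing the $O(\log n)$ color classes sequentially costs $O(d\log n)$ rounds, dominated by the $O(d\log^{10}n)$ cost of \cref{theorem:main_alg_gk}; the overall round complexity is $O(d\log^{10}n)$.

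The main obstacle is the congestion bookkeeping through the expansion step: one must be sure that growing every cluster by a $d$-neighborhood keeps each edge in only $O(\log^4 n)$ trees, and the crux is that taking the separation parameter to be $2d$ (rather than, say, $d$) is exactly what makes the expanded clusters — and hence the edges of their BFS trees — disjoint within each color class. Beyond that, everything is the standard network-decomposition-to-sparse-cover reduction together with routine stitching of the BFS forests onto the Steiner trees and the accompanying depth/parent/children relabeling.
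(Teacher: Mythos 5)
Your proof is correct and follows essentially the same route as the paper: invoke the $k$-separated weak-diameter network decomposition of Theorem~\ref{theorem:main_alg_gk} with $k=\Theta(d)$, expand each cluster by a $d$-ball, and use the separation to keep same-color expansions (and their BFS trees) disjoint. The paper sets $k=2d+1$ where you set $k=2d$, but both choices give the strict $>2d$ separation needed; beyond that your write-up simply spells out the cover properties and tree-stitching in more detail than the paper's terser argument.
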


\begin{proof}
    We start by constructing a $2d+1$ separated weak network decomposition in $O(d \log^{10}{n})$ rounds, according to \cref{theorem:main_alg_gk}. Then, for each color separately, we expand each of its clusters to its $d$-neighborhood. As different clusters of the same color are at least $2d+1$ apart, the clusters remain disjoint, so each node will be in $O(\log{n})$ clusters (at most one per color). Additionally, each edge will join at most one cluster tree, so each edge still appears in only $O(\log^4{n})$ cluster trees.
\end{proof}

\subsection{Construction of Sparse $B^j$-Cover, given a Layered Sparse $B^{j-1}$-Cover in the \energy Model}

\label{subsec:energybfs}

The algorithm described in \cref{subsec:deterministiccover} for constructing sparse $d$-covers consists of simple multi-sources BFSs up to distance $O(d)$. It turns out that, given a layered sparse $B^{j-1}$-cover, we can construct a sparse $B^j$-cover with $\tilde{O}(1)$ energy cost in the \energy model, by making a small adjustment to our approach from \cref{subsec:BFSwithcovers}. 

\begin{theorem}
    \label[theorem]{theorem:coverwithcovers}
    Given a layered sparse $B^{j-1}$-cover, there exists an algorithm that works in time $O(B^j\log^{15}{n})$ and energy cost $O(\log^{25}{n})$, and computes 

    \begin{itemize}

    \item a sparse $B^{j}$-cover, and 

    \item for every cluster $C$ of sparse $B^{j-1}$-cover, assigns it a cluster of a newly found sparse $B^j$-cover that completely contains it and its $\frac{B^{j}}{2}$-neighborhood (so that every node of $C$ knows the identifier of this parent cluster)

    \end{itemize}

\end{theorem}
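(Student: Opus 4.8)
The plan is to run the synchronous construction behind \Cref{theorem:synchcover} with parameter $d=B^{j}$ (i.e.\ the network decomposition of \Cref{theorem:main_alg_gk} followed by a $B^{j}$-neighbourhood expansion), but to replace each of its communication steps by a low-energy simulation built on the given layered sparse $B^{j-1}$-cover, in essentially the manner of \Cref{subsec:BFSwithcovers}. Recall that, as noted at the start of \Cref{subsec:energybfs}, all communication in that construction travels along BFS layers / Steiner trees and amounts to a sequence of $\polylog(n)$ multi-source BFS-type sweeps, each of depth $O(B^{j}\,\polylog(n))$ and of total length $O(B^{j}\log^{10}n)$. So it suffices to (a) run one such sweep with $\polylog(n)$ energy per node, (b) repeat this for the $\polylog(n)$ sweeps of the construction, and (c) extract the parent-cluster assignment at the end.

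For (a), I would reuse the waking machinery of \Cref{subsec:BFSwithcovers} essentially verbatim, with the only change being that the role of the ``top'' cover is now played by the scale-$B^{j-1}$ cover we are given, rather than a cover of scale $\gtrsim 2\mathcal{D}$: every scale-$B^{j-1}$ cluster runs convergecast/broadcast with period $B^{j-1}$ for the entire duration of the sweep, while a scale-$B^{i}$ cluster $C$ with $i\le j-2$ is activated by $\mathrm{parent}(C)$ (of scale $B^{i+1}$) exactly as before. The completeness statement of \Cref{lemma:induction} then goes through by the same induction: at every interface $B^{i+1}\to B^{i}$ with $i\le j-2$, when the sweep enters $\mathrm{parent}(C)$ it is still at least $\tfrac{B^{i+1}}{2}$ hops from $C$, whereas $\mathrm{parent}(C)$ learns it has been reached within $O(B^{i+1}\log^{3}n)$ rounds, i.e.\ within $O(B^{i+1})$ sweep steps, which is below $\tfrac{B^{i+1}}{2}$ steps once the sweep is run slowly enough; and at the top, the permanently-awake scale-$B^{j-1}$ clusters react within one convergecast/broadcast cycle ($O(B^{j-1}\log^{3}n)$ rounds) to wake their scale-$B^{j-2}$ children in time. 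Keeping the scale-$B^{j-1}$ clusters permanently active is affordable precisely because their period $B^{j-1}$ is only a $\polylog(n)$ factor below the sweep length $O(B^{j}\log^{3}n)=B^{j-1}\cdot O(\log^{6}n)$, and each node lies in only $O(\log n)$ of them; at all strictly smaller scales the activate/deactivate accounting of \Cref{theorem:BFSwithcovers} applies as is. (A node that the sweep never approaches simply never gets activated at scales $<B^{j-1}$, so no separate relevance pass is needed; if the source set changes between sweeps we rerun the $O(B^{j-1}\log^{3}n)$-round initialization cycle, which is dominated.) Routing simultaneous messages through megarounds — each edge lies in $O(\log^{4}n)$ cluster trees per scale and in $O(\log n)$ scales, hence $O(\log^{5}n)$ cluster trees in total — costs a further $O(\log^{5}n)$ factor in both time and energy.

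For (c), apply \Cref{observation:clusterparent} with $\mathcal{C}_{1}$ the given $B^{j-1}$-cover and $\mathcal{C}_{2}$ the newly built $B^{j}$-cover, taking $s=B/2$ (an upper bound on the stretch of $\mathcal{C}_{1}$), so that $2ds=B^{j}$ and $ds=B^{j}/2$: every scale-$B^{j-1}$ cluster $C$, of diameter at most $B^{j-1}s\le B^{j}/2$, is contained together with its $\tfrac{B^{j}}{2}$-neighbourhood in some cluster $\hat{C}$ of the $B^{j}$-cover. Concretely, letting $u$ be the root of $C$'s cluster tree, the cover property of the $B^{j}$-cover yields a cluster containing all nodes within distance $B^{j}$ of $u$, which (since $\mathrm{diam}(C)\le B^{j}/2$) contains $C$ and its $\tfrac{B^{j}}{2}$-neighbourhood; such a cluster can be identified from the $B^{j}$-neighbourhood-expansion data of the construction (or by a few more depth-$O(B^{j})$ low-energy sweeps testing the $O(\log n)$ candidates), and its identifier is then broadcast down $C$'s cluster tree. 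This adds only $O(B^{j-1}\log^{3}n)$ rounds and $\polylog(n)$ energy per cluster, which is dominated.

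Putting the pieces together, the running time is $O(B^{j}\log^{10}n)$ (the synchronous construction, suitably slowed) times the $O(\log^{5}n)$ megaround factor, i.e.\ $O(B^{j}\log^{15}n)$, and the energy is $\polylog(n)$ per sweep times $\polylog(n)$ sweeps times the megaround factor, i.e.\ $O(\log^{25}n)$. The part I expect to require the most care is step (a): verifying that truncating the activation hierarchy at scale $B^{j-1}$ and keeping those clusters permanently awake still yields \emph{complete} waking — essentially a re-proof of \Cref{lemma:induction} with the top layer at scale $B^{j-1}$ instead of at scale $\ge 2\mathcal{D}$ — while simultaneously keeping the extra energy at $\polylog(n)$; everything else is bookkeeping on top of \Cref{theorem:synchcover}, \Cref{theorem:BFSwithcovers}, and \Cref{observation:clusterparent}.
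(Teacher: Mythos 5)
Your proposal is correct in outline and in the parent-assignment step (c), which matches the paper's use of \Cref{observation:clusterparent} via the Steiner-tree root, but it takes a genuinely different route for the core part (a). The paper does \emph{not} re-open the waking machinery of \Cref{subsec:BFSwithcovers}: it simply observes that, since $O(B^{j})=O(B)\cdot B^{j-1}=O(\log^{3}n)\cdot B^{j-1}$, a depth-$O(B^{j})$ BFS sweep can be implemented as $O(\log^{3}n)$ \emph{sequential} calls to the $B^{j-1}$-thresholded BFS of \Cref{theorem:BFSwithcovers} as a black box (each call advancing the frontier by another $B^{j-1}$ and re-seeding from the new frontier). This modularity is precisely what lets the paper invoke \Cref{lemma:induction} unchanged. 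You instead propose a single depth-$O(B^{j})$ sweep with the activation hierarchy truncated at scale $B^{j-1}$ and with the scale-$B^{j-1}$ clusters kept permanently awake, which forces you to re-prove \Cref{lemma:induction} with a non-spanning top layer and to argue separately that the permanently-awake top layer only costs a $\polylog(n)$ factor (you correctly flag this as the delicate part). Both routes work and land on the same asymptotics; the paper's is cleaner because it reuses \Cref{theorem:BFSwithcovers} verbatim and so carries over its correctness proof for free, while yours avoids the minor bookkeeping of re-seeding and re-synchronizing between the $O(\log^{3}n)$ chained calls, at the cost of a fresh (if routine) induction. For the collection/decision substeps of the network-decomposition phases, both your sketch and the paper rely on the same convergecast/broadcast with megarounds, so no substantive difference there.
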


\begin{proof}
    We start by providing an algorithm to build $(2B^j + 1)$-separated weak decomposition from the \cref{theorem:main_alg_gk}. As a reminder, a brief structure of an algorithm is provided below:

    \begin{itemize}
        \item Algorithm consists of constructing $O(\log{n})$ colors, one at a time
        \item Constructing one color involve $O(\log{n})$ phases, corresponding to the number of bits in the identifiers; in each phase we kill some nodes and change labels of some other nodes
        \item One phase consists of $O(\log^2{n})$ steps, each consisting of two substeps:

        \begin{itemize}
            \item Performing a BFS up to depth $(2B^j+1) = O(B^j)$ from all the blue nodes
            \item Collecting the number of proposing nodes in the roots of the Steiner trees, and making corresponding decisions (accept/reject)
        \end{itemize}
    \end{itemize}

    From \cref{theorem:BFSwithcovers} we know how to perform $B^{j-1}$-thresholded BFS. So, we can implement $O(B^j)$-thresholded BFS as $O(\log^3{n})$ rounds of $B^{j-1}$-thresholded BFS, which takes time $O(B^{j-1}\log^{11}{n} \cdot \log^3{n}) = O(B^j\log^{11}{n})$ and has energy cost $O(\log^{21}{n})$. As we perform $O(B^j)$-thresholded BFS $O(\log^4{n})$ times, the total time spent here is $O(B^j\log^{15}{n})$, and the energy cost is $O(\log^{25}{n})$.

    Collecting the number of proposing nodes in Steiner tree roots, making decisions, and propagating them is easier with our already developed convergecast and broadcast approaches. For each such convergecast/broadcast, we need $O(1)$ energy cost per Steiner tree, with $O(B^{j-1}\log^3{n})$ rounds. But, since each edge is in $O(\log^4{n})$ Steiner trees of clusters of $B^j$-cover, we have to divide rounds of this step into mega-rounds of length $O(\log^4{n})$, leading to $O(\log^4{n} \cdot \log^4{n}) = O(\log^{8}{n})$ energy cost per all Steiner trees, and time complexity $O(B^{j-1}\log^3{n} \cdot \log^4{n}) = O(B^{j}\log^4{n})$. As we perform such collecting $O(\log^4{n})$ times, the time spent is $O(B^j\log^{7}{n})$, and the energy cost is $O(\log^{12}{n})$. Thus, overall, the total time complexity is $O(B^j\log^{15}{n})$, and the energy cost is $O(\log^{25}{n})$.

    After constructing the $(2B^j + 1)$-separated weak decomposition, we apply the algorithm from \cref{theorem:synchcover}: we do $O(\log{n})$ BFSs up to depth $B^j$, one from nodes of every color; these BFSs are dominated by the time complexities of constructing this decomposition. 

    Finally, we need to mention how to assign a parent to every cluster of the sparse $B^{j-1}$-cover. For every node $v$, we make it remember the cluster of its color, obtained after expansion by the BFS up to distance $B^j$. Now, for every cluster $C$ of sparse $B^{j-1}$-cover, we assign to it a cluster remembered by the root of its Steiner tree. This cluster contains $B^j$-neighborhood of the root, so it satisfies the required conditions (refer to \cref{observation:clusterparent} for details). Then, we do one round of broadcast from the roots of Steiner trees to the nodes of the trees.
\end{proof}

\subsection{The Complete BFS Algorithm in $\tilde{O}(D)$ Time and $\tilde{O}(1)$ Energy}

\begin{theorem}
    \label[theorem]{theorem:main}
    Let $D$ denote the diameter of the entire graph. There exists an algorithm that correctly computes BFS from scratch, works in time $O(D\log^{18}{n})$, and has energy cost $O(\log^{26}{n})$. 

\end{theorem}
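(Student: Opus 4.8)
The plan is to run an \emph{exponential search} on the (unknown) diameter $D$, building the layered sparse covers one level at a time and, after each new level, attempting a thresholded BFS on the whole graph. As a preprocessing step we construct a sparse $B^0$-cover (i.e.\ a sparse $1$-cover) directly via \cref{theorem:synchcover} with $d=1$; since that algorithm runs in $O(\log^{10}n)$ rounds overall, executing it in the \energy model with every node awake throughout costs only $O(\log^{10}n)$ energy. Then we run phases $j=1,2,3,\dots$, where phase $j$ does the following. (i) Using the covers built in phases $\le j-1$, invoke \cref{theorem:coverwithcovers} to produce the sparse $B^j$-cover together with the parent pointers from the level-$(j-1)$ clusters to the level-$j$ clusters; after phase $j$ we hold a layered sparse $B^j$-cover. (ii) Invoke \cref{theorem:BFSwithcovers} with $\mathcal{D}=B^j$ to run a $B^j$-thresholded BFS from $S$. (iii) Run the termination test described below; if it reports ``done,'' halt and output the computed distances, otherwise continue to phase $j+1$. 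A crucial structural point is that the round complexity of each of steps (i)--(iii) is a fixed, explicitly computable function of $j$ and $n$ alone (we pad each step to a worst-case bound). Consequently every node can compute \emph{in advance} the exact round at which each phase begins, so phases self-synchronize and we never need a global broadcast just to start a phase---which would cost $\Theta(D)$ per phase and be ruinous.

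The termination test must determine, using $\tilde{O}(D)$ rounds and $\polylog(n)$ energy, whether the current $B^j$-thresholded BFS has reached \emph{every} vertex. We cannot route this through a global spanning tree, since computing a maximal forest (\cref{theorem:spanningenergy}) already costs $\tilde{O}(n)$ time. Instead we piggyback on the top-level (level-$j$) cluster trees, which are produced together with the $B^j$-cover and come equipped with depth labels: over all of them simultaneously we run the energy-efficient convergecast/broadcast routine of \cref{subsubsec:collecting} with period $\Theta(B^j\log^3 n)$, dividing rounds into megarounds of length $O(\log^4 n)$ to accommodate the $O(\log^4 n)$ cluster trees that can share a single edge. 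The convergecast over a cluster aggregates the number of its vertices and the number of its vertices already reached by the BFS; the cluster's root declares ``done'' exactly when these two counts are equal and equal to $n$, and then broadcasts this verdict back down. Once $B^j\ge D$, some level-$j$ cluster contains the entire $B^j$-ball of any fixed vertex and is therefore all of $V$; its convergecast reports total count $n$, and, since $\mathcal{D}=B^j\ge D$ forces the thresholded BFS to have reached all $n$ vertices, the test fires. Moreover, that cluster being all of $V$ means it contains every vertex, so every vertex learns the verdict and halts consistently; if the test does not fire, every vertex simply advances to phase $j+1$ at the predetermined round. (We assume $G$ is connected, so $D$ is finite; the general case is handled componentwise.)

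For the complexity, let $J$ be the first index with $B^J\ge D$. Since $D\le n$ and $B=\Theta(\log^3 n)$, we get $J=O(\log_B n)=O(\log n)$ and $B^J=O(B\cdot D)=O(D\log^3 n)$. In phase $j$ the running time is dominated by the cover construction, $O(B^j\log^{15}n)$ (the BFS contributes $O(B^j\log^{11}n)$ and the test $O(B^j\,\polylog(n))$), so the total time is $\sum_{j\le J}O(B^j\log^{15}n)=O(B^J\log^{15}n)=O(D\log^{18}n)$, a geometric sum dominated by its last term. For energy, each phase costs $O(\log^{25}n)$ for the cover construction, $O(\log^{18}n)$ for the BFS, and $\polylog(n)$ for the test, so over the $J=O(\log n)$ phases plus the $O(\log^{10}n)$-energy preprocessing the total energy is $O(\log^{26}n)$. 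This matches the claimed bounds.

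The step I expect to be the main obstacle is the termination test: it has to be simultaneously energy-cheap and fast enough ($\tilde{O}(D)$ at the stopping phase), which forbids any global aggregation, and it must leave all vertices in agreement about whether to continue. Routing it through the top-level cluster trees resolves both tensions at once---their depth is $O(B^J\log^3 n)=\tilde{O}(D)$ at the stopping phase, and the one cluster large enough to certify completion is forced to equal $V$ itself, so its verdict reaches everyone. The remaining ingredients---data-independent phase lengths so that phases synchronize for free, and the collapse of the geometric sums to their last term---are routine.
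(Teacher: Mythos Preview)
Your proposal is correct and follows essentially the same approach as the paper: exponentially search for $D$ by building the layered covers one level at a time, and after each new level use the top-level cluster trees to detect whether some cluster already equals $V$. The paper differs only in small implementation details---it also constructs the sparse $B$-cover (not just the $1$-cover) during the all-awake preprocessing so that \cref{theorem:coverwithcovers} is first invoked at $j=2$, it runs the BFS just once after the test succeeds rather than in every phase, and its termination test has each node check locally whether all its neighbors lie in the same cluster (then convergecasts a single bit) instead of counting vertices and comparing to~$n$.
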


\begin{proof}
    We start by computing sparse $1$-cover and sparse $B$-cover, by simply having all nodes be awake for the entire duration. By \cref{theorem:synchcover}, we can do this with time complexity and energy cost of $O(B\log^{10}{n}) = O(\log^{13}{n})$. Next, we will compute sparse $B^2$-cover, sparse $B^3$-cover, $\ldots$, sparse $B^L$-cover, where $L = \lceil{\log_B D\rceil}$, as in \cref{theorem:coverwithcovers}. Then, with the layered sparse $D$-cover, we will compute the actual BFS, as in \cref{theorem:BFSwithcovers}. 
    
    How do we know when to stop the BFS? Nodes do not know $D$. So, instead, after we constructed sparse $B^j$-cover, we check if any of its clusters contains all the nodes of the graph, and tell all nodes that the phase of constructing sparse covers is over, and they can start the BFS part now. 

    How to check if any cluster of the sparse $B^j$-cover contains all the nodes of the graph? After the construction of $B^j$-cover, Let us have a period of $\log{n}$ rounds, in which all nodes are awake, and all nodes tell all their neighbors the list of clusters of $B^j$-cover, in which they are present. Then, for every cluster $C$ of the sparse $B^j$-cover, every node checks whether all its neighbors told it that they are in $C$. Then, this information is propagated to the root of $C$ with the convergecast. If the root detects that $C$ contains all nodes, it broadcasts this information to them. This would take extra $O(B^j)$ time and extra $O(\log{n}) + O(1) = O(\log{n})$ energy cost, which are dominated by the construction of sparse covers.

    By the time we construct sparse $B^L$-cover, some clusters will definitely contain all the nodes. So, the resulting time complexity is $O(\log^{11}{n}) + O(\log^{15}{n}(B^2 + \ldots + B^L)) = O(\log^{15}{n}B^L)  = O(D\log^{18}{n})$, and the resulting energy cost is $O(\log^{25}{n} \cdot \log{n}) + O(\log^{18}{n}) = O(\log^{26}{n})$, as desired.
\end{proof}

With this idea, we can now compute $\mathcal{D}$-thresholded BFS for any $\mathcal{D}$, with an analogous proof. 

\begin{theorem}
    \label[theorem]{theorem:mainthresholded}
    Let $\mathcal{D}$ be any integer. There exists an algorithm that correctly computes $\mathcal{D}$-thresholded BFS from scratch, works in time $O(\mathcal{D}\log^{18}{n})$, and has energy cost $O(\log^{26}{n})$. 
\end{theorem}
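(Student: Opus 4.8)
The plan is to mirror the proof of \Cref{theorem:main} almost verbatim; the only genuine difference is that here the cutoff distance $\mathcal{D}$ is part of the input, so I would not need the gadget from \Cref{theorem:main} that detected when a single cluster had swallowed the whole graph --- I simply run the cover-building recursion for exactly $L=\lceil\log_B\mathcal{D}\rceil$ layers. Since distances in an unweighted $n$-node graph are below $n$, a $\mathcal{D}$-thresholded BFS with $\mathcal{D}\ge n$ is just an ordinary multi-source BFS and is already handled by \Cref{theorem:main}; so I may assume $\mathcal{D}<n$, whence $L=O(\log n)$.

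First I would build the two smallest covers --- the sparse $B^0$-cover and the sparse $B^1$-cover --- together with the parent pointers assigning each $B^0$-cluster to a $B^1$-cluster containing its $\tfrac{B}{2}$-neighborhood, directly via the synchronous construction of \Cref{theorem:synchcover} while keeping every node awake throughout; since $B=\Theta(\log^3 n)$ this costs $O(B\log^{10}n)=O(\log^{13}n)$ in both time and energy and yields a layered sparse $B^1$-cover to seed the recursion. Then I would apply \Cref{theorem:coverwithcovers} successively for $j=2,3,\dots,L$: given the layered sparse $B^{j-1}$-cover in hand, it produces the sparse $B^j$-cover together with the parent assignment of each $B^{j-1}$-cluster to a $B^j$-cluster containing its $\tfrac{B^j}{2}$-neighborhood, thereby extending the structure to a layered sparse $B^j$-cover. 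Step $j$ costs $O(B^j\log^{15}n)$ time and $O(\log^{25}n)$ energy; the times telescope to $\sum_{j=2}^{L}O(B^j\log^{15}n)=O(B^L\log^{15}n)=O(\mathcal{D}\log^{18}n)$ using $B^L\le B\mathcal{D}=O(\mathcal{D}\log^3 n)$, and the energies sum to $O(L\log^{25}n)=O(\log^{26}n)$. Finally I would feed the resulting layered sparse $\mathcal{D}$-cover into \Cref{theorem:BFSwithcovers}, which computes the $\mathcal{D}$-thresholded BFS in $O(\mathcal{D}\log^{11}n)$ time and $O(\log^{18}n)$ energy. Summing the three phases gives total time $O(\log^{13}n)+O(\mathcal{D}\log^{18}n)+O(\mathcal{D}\log^{11}n)=O(\mathcal{D}\log^{18}n)$ and total energy $O(\log^{13}n)+O(\log^{26}n)+O(\log^{18}n)=O(\log^{26}n)$, as claimed.

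I do not expect a serious obstacle, since all the heavy lifting sits inside \Cref{theorem:synchcover}, \Cref{theorem:coverwithcovers}, and \Cref{theorem:BFSwithcovers}. The one point that would need genuine care is the bootstrapping: \Cref{theorem:coverwithcovers} requires a full \emph{layered} sparse $B^{j-1}$-cover --- the covers of all smaller scales plus the inter-scale parent pointers --- so I must make sure the all-awake base phase delivers the $B^0$- and $B^1$-covers together with the parent links between them, not the two covers in isolation; with every node awake this is just the synchronous construction plus one extra convergecast/broadcast per cluster tree. A secondary, minor point is to confirm that dropping the termination-detection gadget of \Cref{theorem:main} is harmless here, which it is: that gadget existed only because $D$ was unknown, whereas $\mathcal{D}$ is given and pins down $L$ outright.
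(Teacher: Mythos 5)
Your proposal is correct and mirrors the paper's own reasoning: the paper in fact gives no separate argument for \Cref{theorem:mainthresholded} beyond remarking that the proof is ``analogous'' to that of \Cref{theorem:main}, and what you write is exactly the intended analogy --- build the $B^0$- and $B^1$-covers synchronously, bootstrap via \Cref{theorem:coverwithcovers} up to scale $B^L$ with $L=\lceil\log_B\mathcal{D}\rceil$, then run \Cref{theorem:BFSwithcovers}, with the termination-detection step of \Cref{theorem:main} dropped since $\mathcal{D}$ is given. Your extra observations (restricting to $\mathcal{D}<n$ so that $L=O(\log n)$ keeps the energy sum at $O(\log^{26}n)$, and the need to produce the $B^0\!\to\!B^1$ parent pointers during the all-awake base phase) are correct and, if anything, slightly more careful than the paper's terse treatment.
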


\subsection{Closest-Source Shortest Paths in the \energy Model}

\label{subsec:energymsps}

Now we go back to the original problem, where weights are nonnegative integers in $[0, \poly(n)]$. 

\begin{restatable}{theorem}{MAINCSSP}
\label[theorem]{theorem:main_cssp_energy}
    Consider a graph $G = (V, E)$ and a set of sources $S$. If the weights of all edges in $G$ are nonnegative integers bounded by $O(\poly(n))$, there exists an algorithm that computes all distances $dist(S, v)$, in time $O(n\log^{19}{n})$ and energy cost $O(\log^{26}{n})$.
\end{restatable}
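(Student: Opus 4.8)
The plan is to run the recursive ``distributified Dijkstra'' scheme of \Cref{subsec:mainAlgo} --- the $\mathcal{D}$-thresholded CSSP recursion with an approximate cutter --- essentially unchanged, but to replace each of its three ingredients with the energy-efficient counterparts developed in this section, and to redo all inter-phase coordination so that every node follows a fixed periodic wake-up schedule rather than reacting to unpredictably-timed messages (which, in the \energy model, would be lost while the node sleeps). Each call $T(V',S',\mathcal{D}')$ needs: (a) a maximal spanning forest of $V'$, which I would obtain from \Cref{theorem:spanningenergy} ($\tilde{O}(|V'|)$ time, $\poly(\log n)$ energy); (b) an approximate cutter, i.e.\ the rounding-based distance approximation of \Cref{lemma:approx}, discussed below; and (c) convergecasts/broadcasts over the forest trees to detect that a recursive call has finished everywhere and to agree on the start time of the next phase --- these I would implement with the periodic convergecast/broadcast of \Cref{subsubsec:collecting}, fixing a period $p=\poly(\log n)$, so that a signal crosses a depth-$d$ tree in $O(d+p)$ rounds while each node is awake only an $O(1/p)$ fraction of the time.

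For ingredient (b), recall that \Cref{lemma:approx} is realized by a single shortest-path computation in a graph whose edge weights have been rounded and scaled so that the relevant distance range is $O(|V'|)$; I would run this (thresholded) weighted BFS by reducing it to an unweighted thresholded BFS --- via the standard $O(\log n)$-scales rounding together with edge subdivision --- on an auxiliary graph with $\tilde{O}(|V'|)$ nodes, and then invoke the from-scratch algorithm \Cref{theorem:mainthresholded} (costing $O(\mathcal{D}\log^{18}n)$ time and $O(\log^{26}n)$ energy for threshold $\mathcal{D}$, with $\log$ of the node count absorbed into $\log n$). Alternatively, to avoid rebuilding sparse covers inside every subproblem, I would build a single layered sparse $\poly(n)$-cover for $G$ once at the very start, by the cover-construction part of \Cref{theorem:main} ($\tilde{O}(n)$ time, $\poly(\log n)$ energy), and let every thresholded BFS use its lowest $O(\log\mathcal{D}')$ layers through \Cref{theorem:BFSwithcovers} ($O(\log^{18}n)$ energy per call). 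Running (a), (b), (c) concurrently when they contend for the same edge is handled by the megarounds of \Cref{subsubsec:running}: each edge is touched by only $\poly(\log n)$ of these processes, so bundling $\poly(\log n)$ rounds into a megaround multiplies time and energy by only $\poly(\log n)$. Zero-weight edges are folded in exactly as in \Cref{theorem:main_zero}, at $\poly(\log n)$ extra overhead.

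The analysis then parallels \Cref{subsec:analysis}. Correctness is inherited from \Cref{subsec:mainAlgo}: for any $v$ with $dist(S,v)\le\mathcal{D}'$ every vertex of a shortest $S$--$v$ path also has distance $\le\mathcal{D}'$, so running the thresholded BFS with the vertices outside the current set merely asleep still returns the correct answers. By \Cref{lemma:appear} each vertex lies in $O(\log\mathcal{D})=O(\log n)$ subproblems and, by \Cref{corollary:appearSum}, $\sum_{\text{subproblems}}|V'| = O(n\log n)$; since each subproblem costs $O(|V'|\,\poly(\log n))$ time (dominated by the threshold-$O(|V'|)$ BFS, with forest and megaround overheads folded in) and the recurrence is a plain sum, the total time is $O(n\,\poly(\log n))$, which I would tighten to $O(n\log^{19}n)$ by propagating the exponents through \Cref{theorem:BFSwithcovers}, \Cref{theorem:mainthresholded}, and \Cref{theorem:spanningenergy}. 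For energy, a node is awake only during the $O(\log n)$ subproblems it participates in, plus the one-time cover construction; within each subproblem it spends $O(\log^{18}n)$ for the BFS, $O(\log^2 n)$ for the forest, and $\poly(\log n)$ for the periodic convergecasts, all times the $\poly(\log n)$ megaround factor, so the sum is $O(\log^{26}n)$.

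The step I expect to be the main obstacle is making the approximate cutter --- the one place where edge weights really enter --- energy-efficient without blowing up either the instance size or a node's wake budget: naively subdividing heavy edges forces a node to simulate polynomially many virtual nodes and hence to wake too often, so one needs the rounding of \Cref{lemma:approx} to keep the auxiliary unweighted graph at $\tilde{O}(n)$ nodes and, crucially, to keep each real node responsible for only $\poly(\log n)$ auxiliary clusters (e.g.\ by clustering each subdivided chain coherently, so that all of an edge's virtual nodes share wake schedules at every relevant cover scale). A secondary obstacle, as in the congestion-model algorithm but sharper here, is the wake/sleep coordination between the first half of the recursion, the imaginary-cut construction, and the second half: since subproblems in different components run at different speeds and a sleeping node loses messages, every handoff must ride on the fixed-period convergecast/broadcast of \Cref{subsubsec:collecting}, and one must argue simultaneously that no node oversleeps past a wake-up and that no node stays awake idling for a slow sibling --- which I would handle by fixing one global period, rounding all tree, cover, and megaround schedules up to multiples of it, and charging the resulting slack to the $O(|V'|)$ term of each subproblem.
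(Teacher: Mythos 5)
Your proposal follows essentially the same route as the paper's proof: re-run the recursive $\mathcal{D}$-thresholded CSSP of \Cref{subsec:mainAlgo}, substituting \Cref{theorem:spanningenergy} for the forest, a low-energy version of \Cref{lemma:approx} built from \Cref{theorem:mainthresholded} via rounding plus edge subdivision for the approximate cutter, periodic convergecast/broadcast as in \Cref{subsubsec:collecting} with megarounds for contention, and then invoking \Cref{lemma:appear}/\Cref{corollary:appearSum} for the $O(n\log^{19}n)$ time and $O(\log^{26}n)$ energy bounds, with zero weights folded in as in \Cref{theorem:main_zero}. The one concern you flag---that a node might have to wake too often to simulate the imaginary nodes on heavy subdivided edges---the paper disposes of in a sentence (the imaginary nodes form a degree-two chain that communicates only internally and with the two real endpoints, so the simulation is message-free except at the chain's boundaries and adds no energy); and the ``build one global layered cover up front'' alternative you suggest is not what the paper does, since each recursive call operates on a shrinking vertex set and rebuilds its own covers inside \Cref{theorem:mainthresholded}.
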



The algorithm is analogous to the algorithm in \cref{subsec:mainAlgo}. The only energy-consuming components are running thresholded BFSs for computing approximations of distances, and computing connected components and their (rooted) spanning trees. We replace the first one with the low-energy thresholded BFS algorithm from \cref{theorem:mainthresholded}, and the second one with the algorithm from \cref{theorem:spanningenergy}. \fullOnly{We defer the complete description of the algorithm and the proof to \cref{app:main_cssp_energy}.}\shortOnly{We defer the complete description of the algorithm and the proof to the full version.}

\section*{Acknowledgement} This work was partially supported by a grant from the Swiss National Science Foundation (project grant 200021$\_$184735). 

\bibliographystyle{alpha}
\bibliography{references, ref2, ref3}

\fullOnly{
\appendix
\clearpage

\section{A missing part of Section \ref{sec:prelim}}
\label{app:prelim}

\begin{proof}[Proof of \Cref{lemma:approx}]

Define $\tau = \frac{\epsilon W}{n}$. For every edge, Let us round up its weight $w(e)$ to the nearest multiple of $\tau:$ $w'(e) = \tau \lceil \frac{w(e)}{\tau} \rceil$. Let us denote this graph with updated weights by $G'$.

Now, let us perform BFS on $G'$, starting with $S$, waiting at the edge with weight $w'(e) = a\tau$ for $a$ rounds. Let us perform it for $k = \lceil \frac{3n}{\epsilon} \rceil$ rounds. Every node at the distance at most $\frac{3n}{\epsilon}\tau = 3W$ from $s$ (in a graph with updated weights) will learn its distance from $S$ in $G'$. 

For some node $t$, consider the shortest path from $S$ to $t$ in $G$: $s = v_0, v_1, \ldots, v_k = t$ for some $s\in S$. In $G'$, we have: 
    
$$dist(S, t) \le dist'(S, t) \le \sum_{i = 0}^{k-1} w'(v_i, v_{i+1}) < \sum_{i = 0}^{k-1} (w(v_i, v_{i+1}) + \tau) < dist(S, t) + n\tau = dist(S, t) + \epsilon W$$

So, for any node $t$ with $dist(S, t) \in [0, 3W- \epsilon W]$, $t$ will learn an approximation $dist'(s, t)$, which satisfies all the constraints from the statement of the lemma. If this BFS doesn't get to a node $v$ in $k$ rounds, it knows that it is at distance at least $2W$ from the sources.
\end{proof}


\section{The missing proof from Theorem \ref{theorem:main_zero}}
\label{app:zero-weight}

\begin{proof}[Proof of \Cref{theorem:main_zero}]
    Let us replace every edge of weight $0$ with an edge of weight $1$, and every edge of weight $w>0$ with an edge of weight $nw$. On this new graph, let us compute the distances from $S$ to all other nodes, with an algorithm from \cref{theorem:main_no_zero} (note that all weights are still $O(\poly(n))$). We will do this in time $O(n\log^2{n})$ and congestion $O(\log^2{n})$. If the $dist(S, v) = x$ in the new graph, the correct distance is simply $\lfloor \frac{x}{n} \rfloor$.
\end{proof}

\section{A review of the network decomposition of Rozhon and Ghaffari}
\label{app:main_alg_gk}

\MAINALGGK*

In the following lemma, we describe the process for constructing the clusters of one color of the network decomposition (e.g., the first color), in a way that it clusters at least half of the vertices.  Since after each application of this lemma only half of the vertices remain, by $\log n$ repetitions, we get a decomposition of all vertices, with $\log n$ colors.

\begin{lemma}\label[lemma]{lemma:main_alg_gk} Let $S\subseteq V$ denote the set of living vertices. There is a deterministic distributed algorithm that, in $O(k \log^9{n})$ communication rounds, finds a subset $S' \subseteq S$ of living vertices, where $|S'|\geq |S|/2$, such that the subgraph $G[S']$ induced by set $S'$ is partitioned into disjoint clusters, so that every two of these clusters are at a distance larger than $k$, and each cluster has weak-diameter $O(k \cdot \log^3 n)$ in graph $G$.

Moreover, for each such cluster $\mathcal{C}$, we have a Steiner tree $T_\mathcal{C}$ with radius $O(k\cdot \log^3 n)$ in $G$ where all nodes of $\mathcal{C}$ are exactly the terminal nodes of $T_\mathcal{C}$. Furthermore, each edge in $G$ is in $O(\log^3{n})$ of these Steiner trees.
\end{lemma}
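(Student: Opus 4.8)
The construction is the ball-growing derandomization of Rozhon and Ghaffari, and the plan is to set it up and then verify, in turn, the four claimed properties (clustering at least half the living vertices, pairwise distance more than $k$, weak-diameter $O(k\log^3 n)$, and each edge of $G$ in $O(\log^3 n)$ Steiner trees). Give each living vertex a unique identifier of $b=\lceil\log_2 n\rceil$ bits, and start with each vertex as a singleton cluster whose label is its own identifier, whose leader is itself, and whose Steiner tree is trivial. Run $b$ \emph{phases}, where phase $i$ processes the $i$-th bit of the labels; at the start of phase $i$ call a live cluster \textbf{blue} if the $i$-th label bit is $0$ and \textbf{red} otherwise, and group the clusters by the length-$(i-1)$ prefix of their label. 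Phase $i$ consists of $\Theta(b\log n)=\Theta(\log^2 n)$ \emph{steps}. In each step, every blue cluster $C$ that has not yet \emph{finalized} runs a BFS of depth $k+1$ in $G$ from its current vertex set, announcing its identifier; every vertex of a red cluster in $C$'s group that is reached becomes a \emph{proposer}, and via convergecast/broadcast along Steiner trees each red cluster requests to join the highest-identifier blue cluster reaching its vertices, while each blue cluster $C$ learns the total number $g(C)$ of vertices in red clusters requesting it. If $g(C)\geq |C|/(2b)$, then $C$ absorbs all requesting red clusters---relabeling their vertices with $\ell(C)$ and extending $C$'s Steiner tree by the BFS connecting paths (length $\leq k+1$) together with those clusters' Steiner trees---and goes on to the next step; otherwise $C$ \emph{finalizes}, stops growing, and every reached vertex of a red cluster of $C$'s group that no blue cluster absorbs is \textbf{killed} (removed for this color). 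After $\Theta(\log^2 n)$ steps every blue cluster has finalized, since each nonfinalizing step multiplies $|C|$ by at least $1+\tfrac1{2b}$, so the surviving clusters---finalized blue clusters and red clusters that stayed far from every finalizing blue cluster of their group---keep their labels and pass to phase $i+1$.

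For the size bound I would charge each killed vertex to the finalizing blue cluster it proposed to: such a cluster $C$ is charged at most $g(C)<|C|/(2b)$ vertices, finalizes only once, and blue clusters are always vertex-disjoint, so the number of vertices killed in a phase is less than $\sum_C |C|/(2b)\leq |S|/(2b)$; summing over the $b$ phases gives fewer than $|S|/2$ killed in total, i.e.\ at least $|S|/2$ survive. For separation I would maintain the invariant that after phase $i$ any two surviving clusters whose labels differ in some position $\leq i$ are at distance more than $k$ (using a slightly larger slack constant, folded into the BFS depth, to absorb off-by-one growth effects): clusters in distinct length-$(i-1)$-prefix groups stay apart because growth in phase $i$ is confined within a group, and within a group a finalizing blue cluster has by construction no red vertex of its group within distance $k+1$, so every surviving red cluster of the group is far from every blue cluster of the group; since distinct clusters always carry distinct labels, the final clusters differ in some bit and hence are pairwise more than $k$ apart.

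The round complexity then follows: one step costs $O(k)$ rounds for the depth-$(k+1)$ BFS plus $O(k\log^3 n)$ rounds for convergecast/broadcast on Steiner trees of radius $O(k\log^3 n)$, multiplied by $O(\log^3 n)$ to schedule the $O(\log^3 n)$ Steiner trees sharing an edge, i.e.\ $O(k\log^6 n)$ per step, $O(k\log^8 n)$ per phase, and $O(k\log^9 n)$ over all $b$ phases.

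The part I expect to be the real obstacle---and for whose details I would defer to~\cite{Rozhon2020Polylogarithmic-timeDerandomization}---is controlling the Steiner trees: keeping every cluster's weak-diameter (hence Steiner-tree radius) at $O(k\log^3 n)$ even though a cluster may absorb others $\Theta(\log^2 n)$ times in a single phase, and keeping each edge of $G$ in only $O(\log^3 n)$ of the trees. Naively gluing an absorbed cluster's tree onto the absorber's tree makes the radius grow multiplicatively with the number of absorptions; the remedy is to maintain each Steiner tree as (essentially) a bounded-depth BFS tree from the cluster's leader within $G$, so that its radius grows only additively---by the $O(k\log^2 n)$ contributed by one phase, hence $O(k\log^3 n)$ over $b$ phases---and to argue that the BFS forests grown in one step add each edge to only $O(1)$ trees, giving $O(\log^2 n)$ per phase and $O(\log^3 n)$ in total. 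I would invoke the analysis of~\cite{Rozhon2020Polylogarithmic-timeDerandomization} for these last two bounds; everything else is as sketched above.
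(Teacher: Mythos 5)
Your overall scaffold---$b=O(\log n)$ phases over label bits, blue/red coloring, $\Theta(b\log n)$ steps per phase, a depth-$\Theta(k)$ BFS proposal and an accept/reject threshold of $|C|/(2b)$, with the killed-vertex accounting summing to $<|S|/2$---is exactly the Rozhon--Ghaffari construction the paper uses, and your round-complexity calculation ($O(k\log^6 n)$ per step, $O(k\log^8 n)$ per phase, $O(k\log^9 n)$ total) matches the paper's line for line. The separation argument is also the right one.

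However, you introduce a modification that the paper does not make, and it is precisely this modification that creates the ``real obstacle'' you then defer to the reference. In your step, an \emph{entire red cluster} (together with its Steiner tree) decides to join a blue cluster and, if accepted, is absorbed wholesale: ``$C$ absorbs all requesting red clusters\,\dots\,together with those clusters' Steiner trees.'' In the paper (and in \cite{Rozhon2020Polylogarithmic-timeDerandomization}) the proposals are strictly \emph{per node}: only the individual red nodes reached by the $k$-hop BFS propose, and if accepted only those nodes change label, with the blue cluster's Steiner tree extended along the $\le k$-hop BFS paths. With per-node joining, the Steiner tree radius of a blue cluster grows by at most $k$ per step, hence by $O(kR)=O(k\log^2 n)$ per phase, which is exactly invariant (II') of the paper ($r \le i\cdot k\cdot R$), and over $b$ phases gives $O(k\log^3 n)$ --- and the ``each edge in $O(1)$ new trees per step'' observation gives the $O(\log^3 n)$ edge-count bound. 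With your per-cluster joining, this additive argument breaks: when a blue cluster of radius $r_A$ absorbs a red cluster of radius $r_B$ at BFS distance $\le k$, the new radius is roughly $r_A + k + 2r_B$, and since $r_B$ can already be $\Theta((i-1)kR)$ at the start of phase $i$, over $R$ absorptions the radius can reach $\Theta(kR^2 \cdot i)$, i.e.\ $\Omega(k\log^5 n)$, not $O(k\log^3 n)$. Your proposed remedy (rebuild the tree as a BFS tree from the leader) does not help, because the \emph{weak diameter} itself grows at this rate; the BFS tree from the leader cannot be shorter than the weak diameter. So the radius and edge-count bounds you hope to inherit from \cite{Rozhon2020Polylogarithmic-timeDerandomization} do not actually apply to the variant you wrote down. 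The fix is just to revert to per-node proposals and per-node acceptance/killing, as in the paper; then the rest of your argument (size, separation, running time) goes through unchanged and the Steiner tree invariants close trivially.

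One minor accounting point, not a correctness issue: you define $g(C)$ as the number of vertices in requesting red \emph{clusters}, and later bound killed vertices by $g(C)<|C|/(2b)$. In the per-node version, $g(C)$ is the number of requesting red \emph{nodes}, which is exactly what dies, so the accounting is tight rather than an overcount.
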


We obtain \Cref{theorem:main_alg_gk} by $c=\log n$ iterations of applying \Cref{lemma:main_alg_gk}, starting from $S=V$. For each iteration $j\in [1, \log n]$, the set $S'$ are exactly nodes of color $j$ in the network decomposition, and we then continue to the next iteration by setting $S\gets S\setminus S'$. 

In the proof of the lemma, the following observation is useful. Once again, its proof is omitted.

\paragraph{Algorithm for \cref{lemma:main_alg_gk}} The construction has $b=O(\log n)$ phases, corresponding to the number of bits in the identifiers. Initially, we think of all nodes of $S$ as \textbf{living}. During this construction, some living nodes \textbf{die}. We use $S'_i$ to denote the set of living vertices at the beginning of phase $i\in [0, b-1]$. Slightly abusing the notation, we let $S'_b$ denote the set of living vertices at the end of phase $b-1$ and define $S'$ to be the final set of living nodes, i.e., $S' := S'_{b}$. 

Moreover, we label each living node $v$ with a $b$-bit string $\ell(v)$, and we use these labels to define the clusters. At the beginning of the first phase, $\ell(v)$ is simply the unique identifier of node $v$. This label can change over time. For each $b$-bit label $L\in \{0,1\}^b$, we define the corresponding cluster $S'_i(L)\subseteq S'_i$ in phase $i$ to be the set of all living vertices $v\in S'_i$ such that $\ell(v)=L$. 
We will maintain one Steiner tree $T_L$ for each cluster $S'_i(L)$ where all nodes $S'_i(L)$ are the terminal nodes of $T_L$. 
Initially, each cluster consists of only one vertex and this is also the only (terminal) node of its respective Steiner tree. 

\medskip
\paragraph{Construction invariants.} The construction is such that, for each phase $i\in [0, b-1]$, we maintain the following invariants: 
\begin{enumerate}
    \item[(I')] For each $i$-bit string $Y$, the set $S'_i(Y)\subseteq S'_i$ of all living nodes whose label ends in suffix $Y$ has no other living nodes $S'_i\setminus S'_i(Y)$ in its $k$-hop neighbourhood. 
    
    In other words, the set $S'_i(Y)$ is a union of some connected components of the subgraph $G[S'_i]$ induced by living nodes $S'_i$ and in the $k$-hop neighbourhood in $G$ around $S'_i(Y)$ all nodes are either dead or they do not belong to the set $S$ (they were colored by previous application of the algorithm). 
    \item[(II')] For each label $L$ and the corresponding cluster $S'_i(L)$, the related Steiner tree $T_L$ has radius at most $i \cdot k \cdot R$, where $R=O(\log^2 n)$. 
    \item[(III)] We have $|S'_{i+1}|\geq |S'_i|(1-1/2b)$.
\end{enumerate}

These invariants, together with the observation that each edge is used in $O(\log n)$ Steiner trees, prove \Cref{lemma:main_alg_gk}.

In particular, from the first invariant, we conclude that at the end of $b$ phases, different clusters are at a distance at least $k+1$ from each other. 
From the second invariant, we conclude that each cluster has a Steiner tree with radius $bR=O(k\log^3{n})$. 
Finally, from the third invariant, we conclude that for the final set of living nodes $S'= S'_{b}$, we have $|S'| \geq (1-1/2b)^{b} |S| \geq |S|/2$.

\medskip
\paragraph{Outline of one phase of construction.} We now outline the construction of one phase and describe its goal. Let us think about some fixed phase $i$. We focus on one specific $i$-bit suffix $Y$ and the respective set $S'_i(Y)$. 
Let us categorize the nodes in $S'_i(Y)$ into two groups of \textbf{blue} and \textbf{red}, based on whether the $(i+1)^{th}$ least significant bit of their label is $0$ or $1$. Hence, all blue nodes have labels of the form $(*\ldots*0Y)$, and all red nodes have labels of the form $(*\ldots*1Y)$, where $*$ can be an arbitrary bit. During this phase, we make some small number of the red vertices die, and we change the labels of some of the other red vertices to blue labels (and then the node is also colored blue). All blue nodes remain living and keep their label. The eventual goal is that, at the end of the phase, among the living nodes, there is no blue node $b$ and red node $r$ with $dist(b, r) \le k$. This leads to invariant (I) for the next phase. The construction ensures that we kill at most $|{S'}_i(Y)|/{2b}$ red vertices of set $S'_i(Y)$, during this phase. We next describe this construction.

\medskip
\paragraph{Steps of one phase.}
Each phase consists of $R = 10b\log n = O(\log^2 n)$ steps, each of which will be implemented in $O(k\cdot \log^6 n))$ rounds. Hence, the overall round complexity of one phase
is $O(k\log^8{n})$ and over all the $O(\log n)$ phases, the round complexity of the whole construction of \Cref{lemma:main_alg_gk} is $O(k\cdot \log^9 n)$ as advertised in its statement. 
Each step of the phase works as follows: all blue nodes start a BFS from them up to distance $k$; a node that is reached in the BFS is added to the cluster of the node from which it received the first ``join'' proposal. Next, all red nodes that were added to the cluster will try to join it, to adopt its label. 

For each blue cluster $A$, we have two possibilities: 
\begin{enumerate}
\item[(1)] If the number of adjacent red nodes that requested to join $A$ is less than or equal to $|A|/2b$, then $A$ does not accept any of them and all these requesting red nodes die (because of their request being denied by $A$). In that case, cluster $A$ \textbf{stops} for this whole phase and does not participate in any of the remaining steps of this phase. 
\item[(2)] Otherwise --- i.e., if the number of adjacent red nodes that requested to join $A$ is strictly greater than $|A|/2b$ --- then $A$ accepts all these requests and each of these red nodes change their label to the blue label that is common among all nodes of $A$. In this case, we also grow the Steiner tree of cluster $A$ by one hop to include all these newly joined nodes.
\end{enumerate}

After the breadth first search algorithm finishes, roots of all Steiner trees collect the number of proposing red nodes and each root decides to either accept all proposing red vertices and recolor them to blue, or it makes them die and stops growing. 
The Steiner trees, however, stay the same even if some of its vertices die, the red nodes that died are just labeled as nonterminals. This finishes the description of one step of current phase. 

As each edge is in $O(\log^3{n})$ cluster trees, and diameters of the clusters are $O(k\log^3{n})$, one step can be implemented in $O(k \log^6{n})$ iterations. As there are $O(\log^2{n})$ steps per phase and $O(\log{n})$ phases per color, we get the resulting runtime of $O(k \log^9{n})$.

\section{The complete closest-source shortest paths algorithm in \energy model}
\label{app:main_cssp_energy}

The main goal of this section is to provide a detailed proof of \cref{theorem:main_cssp_energy}:

\MAINCSSP*

Our algorithm will closely resemble the algorithm from \cref{subsec:mainAlgo}. The goal of this section is to show how to perform each step of that algorithm in the \energy model.

We also assume that the weights are positive and in $[1, 2, \ldots, \poly{n}]$. We handle the case of edges with weight $0$ in the same way as in \cref{theorem:main_zero}.  

\subsection{CSSP approximation in \energy model}

Let us denote the source by $s$. 

\begin{lemma}
    \label[lemma]{lemma:approxenergy}
    Consider a graph $G = (V, E)$ and a set of sources $S$. There is an algorithm in \energy model that, given $\epsilon \in (0, 1)$ and an integer $W>0$, for each node $v$ outputs $dist'(S, v)$, such that:

    \begin{itemize}
        \item If $dist'(S, v) \neq \infty$, then $dist(S, v) \le dist'(S, v) < dist(S, v) + \epsilon W$
        
        \item If $dist'(S, v) = \infty$, then $dist(S, v) > 2W$

        \item It works in time $O(\frac{n\log^{18}{n}}{\epsilon})$ and has energy cost $O(\log^{25}{n})$.
    \end{itemize}

\end{lemma}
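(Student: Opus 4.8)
The plan is to reuse the reduction in the proof of \Cref{lemma:approx} essentially verbatim and then feed the resulting breadth-first search into the low-energy machinery of \Cref{theorem:mainthresholded}. Set $\tau = \epsilon W / n$ and round every edge weight up to the nearest multiple of $\tau$; after dividing through by $\tau$ this is a graph $G'$ with positive integer weights $a_e = \lceil w(e)/\tau\rceil$. Running a $k$-thresholded weighted BFS from $S$ in $G'$ with $k = \lceil 3n/\epsilon\rceil$ — i.e.\ waiting $a_e$ rounds to cross $e$ — reaches every node within $G'$-distance $k\tau = 3W$; since rounding inflates the length of any path with at most $n$ edges by less than $n\tau = \epsilon W$, the output $dist'(S,v)$ satisfies $dist(S,v)\le dist'(S,v) < dist(S,v)+\epsilon W$ whenever it is finite, and $dist'(S,v)=\infty$ forces $dist(S,v) > 3W-\epsilon W > 2W$. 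This is exactly the computation already carried out in the proof of \Cref{lemma:approx}, so I would simply cite it for the two output guarantees.

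The genuinely new part is running this $k$-thresholded BFS with $\tilde O(1)$ energy. I would \emph{not} literally subdivide each weight-$a_e$ edge into a path of $a_e$ unit edges and simulate the interior nodes: a high-degree node would then host $\sum_{e\ni u} a_e$ interior nodes, and even though each interior node is awake for only $\poly\log n$ rounds, its host must be awake for all of them, giving energy $\Omega(\deg(u))$ in the worst case. Instead I would rerun the pipeline of \Cref{subsec:BFSwithcovers}, \Cref{subsec:energybfs} and \Cref{theorem:mainthresholded} with two modifications, keeping the real vertex set $V$ throughout. First, all sparse covers are built with respect to the \emph{weighted} metric of $G'$ (clusters of weighted diameter $O(d\cdot s)$, each node in $O(\log n)$ clusters, each physical edge in $O(\polylog n)$ cluster trees); since the constructions of \Cref{theorem:main_alg_gk}, \Cref{theorem:synchcover} and \Cref{theorem:coverwithcovers} are driven by bounded-distance BFS and we are already performing weighted BFS, this adaptation is immediate, with ``weighted distance'' replacing ``hop distance'' everywhere. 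Second, in the convergecast/broadcast of \Cref{subsubsec:collecting} each node stays awake at $O(1)$ rounds per period of each cluster it belongs to — one round to exchange with its parent, offset according to the weight of the parent edge (which the node knows, together with its weighted depth), and one round to exchange with all its children — so information still propagates at weighted-distance speed (matching the BFS speed, as needed for \Cref{lemma:induction}) while every node's total wake time remains $\poly\log n$, independently of its degree and of the weights of its incident edges. With these two changes, the correctness argument of \Cref{lemma:induction} and the time/energy accounting of \Cref{theorem:BFSwithcovers} and \Cref{theorem:coverwithcovers} go through essentially unchanged.

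Putting this together, running the $\lceil 3n/\epsilon\rceil$-thresholded weighted BFS on $G'$ costs $O(\tfrac{n}{\epsilon}\log^{18}n)$ time and $O(\log^{O(1)}n)$ energy; the $O(\log^{25}n)$ bound of the statement follows from the same accounting as in \Cref{theorem:mainthresholded}, with one logarithmic factor saved because $\mathcal{D}=\lceil 3n/\epsilon\rceil$ is given in advance, so the level-by-level termination check of \Cref{theorem:main} is unnecessary. I expect the main obstacle to be precisely the two modifications above: verifying that the weighted-metric versions of the cover constructions still meet the stretch and per-edge congestion guarantees, and that the offset convergecast/broadcast (together with the megarounds bookkeeping of \Cref{subsubsec:running} across the — now weighted — cluster trees a node joins) really does keep every node's total wake time polylogarithmic even when some incident edge has weight $\poly(n)$. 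Everything else is a routine repetition of the unweighted arguments.
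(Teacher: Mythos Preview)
Your route departs from the paper's in precisely the place you anticipate. The paper \emph{does} subdivide each edge of $G'$ into a unit-weight path and invokes \Cref{theorem:mainthresholded} on the subdivided graph as a black box, justifying the simulation with a single sentence: the endpoints ``can simulate the behavior of the imaginary nodes \dots\ so they don't affect runtime or energy cost.'' Your objection to this is well-taken. That sentence does not explain why a host $u$ simulating interior nodes along all $\deg(u)$ incident paths stays awake only $\poly(\log n)$ rounds. The host need not be awake for communication between two nodes it itself simulates, but it must be awake whenever a boundary virtual edge $(x^{(e)}_{a_e-1},v_e)$ carries a message in the virtual-graph run of \Cref{theorem:mainthresholded}; those $\deg(u)$ boundary edges sit in different cluster trees with potentially disjoint convergecast/broadcast schedules, so the union of required wake-ups may well scale with $\deg(u)$. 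The paper leaves this point unaddressed.

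Your alternative---rebuilding the cover construction and the convergecast directly in the weighted metric of $G'$, so that every \emph{real} node lies in $O(\log n)$ clusters per level and wakes $O(1)$ times per period regardless of its degree or incident edge weights---cleanly avoids the issue and gives an energy bound that is manifestly degree-free. The offset convergecast you sketch does work once written out carefully: if $v$ wakes at round $kp-d_v$ to receive from all children (each child $c$ computes this same round as $kp-d_c+a_c$ from its own weighted depth $d_c$ and parent-edge weight $a_c$, so all children send simultaneously) and at $kp-d_v+a_v$ to send to its parent, then information climbs the tree in time equal to its weighted depth while every node wakes twice per period. What remains is exactly what you flag as the real work: checking that the Rozhon--Ghaffari analysis (Steiner-tree radius and per-edge tree-membership bound) goes through when the growth BFS is replaced by weighted BFS, and redoing the megaround accounting to recover the stated $O(\log^{25}n)$.
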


\begin{proof}

As in the proof of \cref{lemma:approx}, we define $\tau = \frac{\epsilon W}{n}$. For every edge, let us round up its weight $w(e)$ to the nearest multiple of $\tau:$ $w'(e) = \tau \lceil \frac{w(e)}{\tau} \rceil$. Let us denote this graph with updated weights by $G'$. If we denote distances in graph $G'$ by $dist'$, for every node $t$ we have 

$$dist(S, t) \le dist'(S, t) < dist(S, t) + \epsilon W$$

So, the only question is how to perform $k$-thresholded BFS on $G'$, where $k = \lceil \frac{3n}{\epsilon} \rceil = O(\frac{n}{\epsilon})$. For an edge $e$ with new weight $w'(e) = a\tau$ for $a>0$, Let us create $a-1$ imaginary nodes on it, which subdivide $e$ into $a$ edges of weight $1$. In this graph all edges have weight $1$, so we just have to perform a regular $O(\frac{n}{\epsilon})$ thresholded CSSP. Finally, note that endpoints of edge $e$ can simulate the behavior of the imaginary nodes (as these imaginary only ever send messages to each other end to the endpoints), so they don't affect runtime or energy cost. Hence, according to \cref{theorem:mainthresholded}, we get an algorithm for this which works in time $O(\frac{n\log^{18}{n}}{\epsilon})$ and has energy cost $O(\log^{25}{n})$.

\end{proof}


\subsection{The actual algorithm for CSSP in \energy model}

We will describe an algorithm for $\mathcal{D}$-thresholded CSSP in the \energy model. 
To compute CSSP we would then simply run this $\mathcal{D}$-thresholded CSSP for $\mathcal{D} = 2^L \ge n\cdot maxW$.

\begin{enumerate}

    \item If $\mathcal{D} = 1$, we are in the base case. In this case, the only nodes with $dist(S, v) \le \mathcal{D}$ are the sources themselves and nodes that are connected to some source by an edge of weight $1$. All nodes can detect this in one round. If $\mathcal{D}>1$, proceed to the next steps.

    \item Compute all connected components of $G$, and a spanning tree for each of them, with an algorithm from \cref{theorem:spanningenergy}. We will be solving this problem for each component independently.
    
    \item Choose $\epsilon = 0.5$, and compute an approximation $dist'(S, v)$ of distances from all nodes to $S$ with an algorithm from \cref{lemma:approxenergy}. Let $V_1$ denote the set of nodes $v$ with $dist'(S, v) <\mathcal{D} + \epsilon \mathcal{D}$. Clearly, for any $v \in V_1$ we have $dist(S, v) < \mathcal{D} + \epsilon \mathcal{D}$. Also, for any $v \in V$ with $dist(S, v) \le \mathcal{D}$ we have $dist'(S, v) < \mathcal{D} + \epsilon \mathcal{D}$, so $v \in V_1$.
    
    \item Let $\mathcal{D}_1 = \frac{\mathcal{D}}{2}$. Perform a $\mathcal{D}_1$-thresholded CSSP for nodes $V_1$, with set $S$ of sources. The question becomes, how do the nodes of a particular connected component $C$ learn that all nodes of $C$ which are in $V_1$ are done with this $\mathcal{D}_1$-thresholded CSSP, as we cannot do simple convergecast. 
    
    The solution is simple: in step $2$, we will also tell all nodes the sizes of their connected components. Then, we will collect information about the completion of $\mathcal{D}_1$-thresholded CSSP via the spanning tree of each component $C$, with period $|C|$, as described in \cref{subsubsec:collecting}. This way, all nodes agree on a time when they will proceed to the next step.
    
    \item 
    Let $V_2$ denote the set of all nodes $v$ with $d(S, v) \le \mathcal{D}_1$ (after previous steps, all nodes know whether they are in $V_2$). For every edge $(v, u)$ with $v \in V_2, u \in V_1 \setminus V_2$, create an imaginary node $x_{vu}$ somewhere on the edge $(u, v)$, splitting it into two edges $(v, x_{vu})$ and $(x_{vu}, u)$, so that $w((v, x_{vu})) = \mathcal{D}_1 - dist(S, x)$. Let us denote this set of imaginary nodes as $X$. These nodes from $X$ form a ``cut'' at distance $\mathcal{D}_1$ from $S$. 

    \item Finally, perform $\mathcal{D}_1$-thresholded CSSP for nodes from $X\cup (V_1 \setminus V_2)$ with set $X$ of sources. For every imaginary node $x_{vu}$, node $u \in V_1 \setminus V_2$ can simulate the messages of $x_{vu}$ (it only ever sends messages to $u$). 

    For any node $v \in V_1 \setminus V_2$, $dist(X, v) = dist(S, v) + D_1$. As $\mathcal{D} = 2\mathcal{D}_1$, then, if $dist(X, v) \le \mathcal{D}_1$, we have $dist(S, v) = \mathcal{D}_1 + dist(X, v) \le \mathcal{D}$, and if $dist(X, v) > \mathcal{D}_1$, we have $dist(S, v) = \mathcal{D}_1 + dist(X, v) > \mathcal{D}$, hence the algorithm computes correct output.
\end{enumerate}

\subsection{Analysis of the algorithm}

As in \cref{subsec:analysis}, Let us denote the running time of $\mathcal{D}$-thresholded BFS for a set of nodes $V$ and a set of sources $S$ by $T(V, S, \mathcal{D})$. Then, the recurrence from \cref{subsec:mainAlgo}, keeping all the notations becomes:

$$T(V, S, \mathcal{D}) = \underbrace{O(|V|\log^2{|V|})}_\text{Connected components} + \underbrace{O(|V|\log^{18}{|V|})}_\text{Approximation} +  \underbrace{T(V_1, S, \mathcal{D}_1)}_\text{Call on $V_1$} +  \underbrace{O(|V|)}_\text{Convergecast} +  \underbrace{T(X \cup (V_1 \setminus V_2), X, \mathcal{D}_1)}_\text{Call on $V_1 \setminus V_2$}$$

This simplifies to

$$T(V, S, \mathcal{D}) = O(|V|\log^{18}{V}) + T(V_1, S, \mathcal{D}_1) + T(V_1 \setminus V_2, X, \mathcal{D}_1)$$

\cref{lemma:appear} and \cref{corollary:appearSum} still hold, so we can proceed to the main result.


\begin{proof}[Proof of \cref{theorem:main_cssp_energy}]

Let us run $T(V, S, \mathcal{D})$ with $\mathcal{D} = 2^L$, where $L$ is the smallest integer such that $2^L \ge n\cdot maxW$ is an upper bound on the distances from $S$ to other nodes. Since all weights are $O(\poly(n))$, $O(\log{\mathcal{D}}) = O(\log{n})$. 

Now we can analyze the recursion: since, by \cref{corollary:appearSum}, the sum of sizes of $|V'|$ over all subproblems is $O(|V|\log{\mathcal{D}})$, the sum of $O(|V'|\log^{18}{V'})$ is $O(|V|\log{\mathcal{D}}\log^{18}{|V|})$. Therefore, $T(V, S, \mathcal{D}) = O(|V|\log{\mathcal{D}}\log^{18}{V}) = O(|V|\log^{19}{V})$.

Now let us analyze energy cost. Every node is involved in only $O(\log{\mathcal{D}})$ subproblems, in each it has to spend up to $O(\log^{26}{n})$ energy. Additionally, in step $4$ of the algorithm, a node may spend up to $O(\log^{19}{n})$ energy waiting until its component is done. This, however, is dominated by the $O(\log^{26}{n})$ term.

Hence, the total energy cost per node is $O(\log^{26}{n})$.

\end{proof}
}

\end{document}